\colorlet{shadecolor}{yellow}
\algnewcommand{\LeftComment}[1]{\Statex \(\triangleright\) #1}
\newtheorem{theorem}{Theorem}
\newtheorem*{lemma*}{Lemma}
\newtheorem{remark}{Remark}
\newtheorem{corollary}{Corollary}
\newtheorem{proposition}{Proposition}
 \def\bomega  
\def\bTheta{{\pmb{\Theta}}}
\def\b0{{\pmb{0}}} 
\newcolumntype{P}[1]{>{\centering\arraybackslash}p{#1}}
\newcommand{\sh}[1]{\textcolor{black}{{#1}}}
\newcommand{\ky}[1]{\textcolor{orange}{{#1}}}
\begin{document}
\title{Trajectory Optimization for Cellular-Enabled UAV with Connectivity and Battery Constraints}

\author{Hyeon-Seong Im,~%\IEEEmembership{Member,~IEEE,}
        Kyu-Yeong Kim,~\IEEEmembership{Graduate Student Member,~IEEE,}
        and Si-Hyeon Lee,~\IEEEmembership{Senior Member,~IEEE}% <-this % stops a space
%%%%%%% \thanks{} %% Acknowledgement
\thanks{
Copyright (c) 2025 IEEE. Personal use of this material is permitted. However, permission to use this material for any other purposes must be obtained from the IEEE by sending a request to pubs-permissions@ieee.org.

This article was presented in part at the IEEE Vehicular Technology Conference (VTC) 2023-Fall \cite{Im:2023_VTC}.
This work was supported in part by the Institute of Information \& Communications Technology Planning \& Evaluation (IITP) grant (No.RS-2024-00360387, Development of core security technology utilizing multimodal properties of wireless communication channels, contribution rate: 50\%), in part by the IITP-ITRC (Information Technology Research Center) grant (IITP-2025-RS-2020-II201787, Development of communication/computing-integrated revolutionary technologies for superintelligent services, contribution rate: 30\%), and in part by Unmanned Vehicles Core Technology Research and Development Program through the National Research  Foundation of Korea (NRF) and the Unmanned Vehicle Advanced Research Center (UVARC) grant (No.2020M3C1C1A01084524, Development of anti-jamming for unmanned vehicles security and detection and counter technology to unlicensed unmanned vehicles, contribution rate: 20\%), all funded by the Korea government (MSIT).

H.-S. Im is with LIG Nex1, Seongnam 13488, South Korea (e-mail: hyeonseong.im@lignex1.com).  He was with the School of Electrical Engineering,  Korea Advanced Institute of Science and Technology (KAIST), Daejeon 34141, South Korea, when conducting this work. K.-Y. Kim and S.-H. Lee (Corresponding Author) are with the School of Electrical Engineering, KAIST, Daejeon 34141, South Korea (e-mail: kimyou283@kaist.ac.kr, sihyeon@kaist.ac.kr). }% <-this % stops a space
}

% ====================================================================
\maketitle

% === ABSTRACT ====================================================================
% =================================================================================
\begin{abstract}
We address the path planning problem for a cellular-enabled unmanned aerial vehicle (UAV) considering both connectivity and battery constraints. The UAV's mission is to expeditiously transport a payload from an initial point to a final point, while persistently keeping the connection with a base station and complying with its battery limit. At a charging station, the UAV's depleted battery can be swapped with a completely charged one. 
Our primary contribution lies in proposing an algorithm that outputs an optimal UAV trajectory with polynomial computational complexity, by converting the problem into an equivalent two-level graph-theoretic shortest path search problem. 
We compare our algorithm with several existing algorithms with respect to performance and computational complexity, and show that only our algorithm outputs an optimal UAV trajectory in polynomial time. 
Furthermore, we consider other objectives of minimizing the UAV energy consumption and of maximizing the deliverable payload weight, and propose algorithms that output an optimal UAV trajectory in polynomial time.

\end{abstract}

\begin{IEEEkeywords}
Unmanned aerial vehicle,  trajectory optimization, connectivity, cellular networks, battery constraint
\end{IEEEkeywords}

% For peer review papers, you can put extra information on the cover
% page as needed:
% \ifCLASSOPTIONpeerreview
% \begin{center} \bfseries EDICS Category: 3-BBND \end{center}
% \fi
%
% For peerreview papers, this IEEEtran command inserts a page break and
% creates the second title. It will be ignored for other modes.
\IEEEpeerreviewmaketitle

% ====================================================================
% ====================================================================
% ====================================================================

\section{Introduction}\label{sec1}
Unmanned aerial vehicles (UAVs) are extensively applied across various scenarios, such as delivery and transportation \cite{Zhang:2021_2}, aerial surveillance and monitoring \cite{Kanistras:2013}, flying base stations (BSs) \cite{Mozaffari:2019}, and data collection and/or power transfer for IoT devices  \cite{Yu:2021}, due to their high mobility, free movement, and cost-effectiveness  \cite{Mozaffari:2019,Shi:2018}. It has been actively studied to design the UAV trajectory according to each operational scenario. For UAV-aided communication scenarios, the UAV trajectory has been optimized taking into account various factors, e.g., minimizing energy while satisfying throughput demands  \cite{Zeng:2019,Qi:2020}, \sh{minimizing the age of information (AoI) of IoT devices \cite{Yi:2023}, jointly optimizing the transmission rate and the rate variation in aerial video streaming scenario \cite{Zhan:2024_2}}, and improving secrecy rate in the presence of an eavesdropper \cite{Li:2019,Cui:2018}. For delivery or transportation scenarios, it is utmost important  to swiftly and safely transport the given objects to their desired destinations. Thus, for such scenarios, the problem of designing UAV trajectory has been formulated as minimizing the mission time or \sh{the energy consumption (equivalent to minimizing the mission time of UAV with fixed speed)} with some constraints such as \sh{connectivity \cite{Zhang:2019,Zhang:2019_2,Chen:2020,Zhan:2022,Zhan:2022_2,Zhang:2021,Esrafilian:2020,Chapnevis:2021,Zeng:2019_2,Khamidehi:2020,Wang:2022,Chen:2022}}, restricted airspace \cite{Khamidehi:2020,Wang:2022}, collision avoidance between UAVs \cite{Wang:2022,Chen:2022}, and battery limits \cite{Sundar:2014,Coelho:2017,Fan:2023,Arafat:2022}. 
In particular, it is important for such scenarios to consistently keep the connection between the UAV and the control station. This persistent connectivity is 
essential for tracking the real-time location of the cargo in delivery scenarios \cite{Zhang:2019} and for providing real-time communication service to passengers in transportation scenarios like urban air mobility (UAM) \cite{Cohen:2021}. It is also vital when the manual control is necessary for the UAV to evade unexpected adverse weather conditions or avoid collisions with aerial obstacles \cite{Banafaa:2024}. However, maintaining direct connection with the control station becomes challenging if the initial and the final points are far away, due to low received signal strength by low line-of-sight (LoS) probability and long communication distance. Cellular-enabled UAV communication is a potential approach for this problem \cite{Zhang:2019}, wherein the UAV communicates with its control station by connecting with a close BS and the underlying cellular network \cite{Agyapong:2014}. 

Our paper addresses the path planning problem for UAVs performing delivery or transportation missions, jointly considering connectivity with the cellular network and UAV’s battery limit. In the absence of battery limit, the problem of minimizing the mission time \sh{(or the energy consumption)} while ensuring continuous connection with the cellular network has been \sh{extensively studied \cite{Zhang:2019,Zhang:2019_2,Chen:2020,Zhan:2022,Zhan:2022_2,Zhang:2021,Esrafilian:2020,Chapnevis:2021,Zeng:2019_2,Khamidehi:2020,Chen:2022,Wang:2022}.}  
The authors of \cite{Zhang:2019} concentrated on planning an optimal path between a initial and a final points while ensuring continuous connection with a BS. To simplify the problem, they assumed that the UAV and a BS can be connected if their communication distance is not larger than a certain threshold. Then, they transformed the problem into graph-theoretic path finding and convex optimization problems, and proposed an optimal algorithm with non-polynomial time complexity and two sub-optimal algorithms with polynomial time complexity. 
The study of characterizing an optimal path under the connectivity constraint has been extended in various directions, e.g., \sh{allowing a certain duration or ratio of communication outage \cite{Zhang:2019_2,Chen:2020,Zhan:2022}, assuming the knowledge of the radio map \cite{Zhan:2022_2,Zhang:2021}}, considering 3-dimensional (3D) space \cite{Esrafilian:2020,Zhang:2021},
%design a 3D path based on 3D building maps \cite{Esrafilian:2020} or 3D radio map \cite{Zhang:2021}, 
and considering the collaboration of multiple UAVs \cite{Chapnevis:2021}. 
Specifically, the work \cite{Chen:2020} proposed an intersection method, significantly reducing the computational complexity by converting the problem into a graph-theoretic path finding problem whose vertex set consists of the intersection points of the coverage boundaries of the BSs. Moreover, the work \cite{Zhang:2021} also used a graph-theoretic approach even for 3D path finding problem with a realistic communication environment considering signal blockage and reflection by buildings and interference from other BSs, by quantizing the radio map to finite grid points. On the other hand, reinforcement learning (RL) \cite{Sutton:2018} based approaches may be effective for scenarios that the UAV only has limited prior knowledge about communication and transportation environment, since it can empirically learn the environment. Several studies have explored to learn effective UAV paths by applying \sh{RL-based approaches \cite{Zhan:2022,Zhan:2022_2,Zeng:2019_2,Khamidehi:2020,Chen:2022,Wang:2022}.} However, It is important to acknowledge that the RL-based approach may not consistently output an optimal path, and the training process involved in RL may demand considerable computing time and resources. 

In practice, it is important to consider the limited battery capacity of the UAV. There have been a few works on designing UAV trajectory performing delivery or transportation missions taking into account the limited battery capacity \cite{Sundar:2014,Coelho:2017,Fan:2023,Arafat:2022}. The work \cite{Sundar:2014} considered a variant of the travelling salesman problem (TSP) that aims to derive a shortest route visiting each target node once, while considering the limited battery capacity of the UAV and charging stations to replenish its energy. Such a UAV route optimization problem with  TSP formulation taking into account the battery limit has been extended by considering multiple UAVs \cite{Coelho:2017,Fan:2023} and grouping target nodes into clusters \cite{Arafat:2022}.  However, the problem of UAV path planning to minimize the mission time for delivery or transportation scenarios while considering both the connectivity and the battery replenishment has not been well studied. We note that for other UAV utilization scenarios, such a problem of  UAV path planning taking into account both the connectivity and the battery replenishment has been studied, but for different objectives depending on the assigned missions, e.g., minimizing the energy consumption of UAV \cite{Zeng:2019,Qi:2020} or the AoI of devices \cite{Yi:2023,Zhan:2024}. 

Our primary contribution involves proposing a path planning algorithm that outputs an optimal flight path of a cellular-enabled UAV in polynomial time complexity, to expeditiously transport a payload from an initial point to a final point, while persistently keeping the connection with a BS and complying with its battery limit. It is assumed that the UAV can establish a connection with a BS if they are closer than a certain threshold similarly as in \cite{Zhang:2019}, but we allow that the threshold can be different for each BS due to interference from other BSs. The UAV’s depleted battery can be swapped with another completely charged one at a charging station, which may involve a certain delay depending on waiting and replacing times \cite{Lee:2015}. The contributions of this paper are summarized as follows:

\begin{itemize}
\item Our problem of optimizing UAV trajectory involves determining the UAV's path, speed, and the order of visiting charging stations. We solve this problem by first reformulating the problem as a two-level graph-theoretic shortest path search problem, and then applying Dijkstra algorithm \cite{Dijkstra:1959}. More precisely, we initially search an optimal path and the corresponding maximum allowable speed for traveling between each pair of charging stations  without swapping to a fully charged battery. Subsequently, we ascertain an optimal sequence of visiting charging stations. To validate the efficacy of our approach, we analytically and numerically compare our algorithm with existing algorithms in \cite{Zhang:2019,Chen:2020} with marginal modifications, and show that only our algorithm yields an optimal solution in polynomial time.
%and show that ours surpasses those in both mission time and computational complexity.

\item Characterizing the maximum deliverable payload weight under the connectivity and battery constraints is another interesting problem. We propose a graph-theoretic  algorithm that yields an optimal solution to this problem in polynomial time. It first transforms the delivery environment into a weighted graph and finds the longest connectivity-critical edge between the initial and the final points in the graph. Then, it derives the largest payload weight which can be delivered over the edge without replacing the battery.

%\item Various numerical results are provided to show the optimal path and the corresponding delivery time according to environmental parameters and compare with the previously proposed algorithms  \cite{Zhang:2019,Chen:2020}.
\end{itemize}
The remaining of this paper is organized as follows. In Section \ref{sec2}, we present the system model and formulate the optimization problem of finding the fastest UAV route under the connectivity and battery constraints. Our proposed algorithms that output optimal UAV trajectories without and with the battery limit are presented in Sections \ref{sec3} and \ref{sec4}, respectively. In Section \ref{sec5}, other objectives of minimizing the UAV energy consumption  and of maximizing the deliverable payload weight are considered and the corresponding optimal algorithms are presented. We provide various numerical results in Section \ref{sec6}. Finally, the paper is concluded in Section \ref{sec7}.

\section{Problem Statement}\label{sec2}
We consider a cellular network with $M$ BSs and $N\leq M$ charging stations (CSs). In this network, a UAV with limited battery capacity travels from an initial point $\mathbf{U}_0$ to a final point $\mathbf{U}_F$ to deliver or transport a payload. The detailed description of the UAV model is in Section \ref{sec2A}. The UAV should maintain the connectivity with one of the BSs while delivering the payload. The BS model and the BS-UAV connectivity are described in Section \ref{sec2B}. The UAV can swap its battery at a CS if needed, as explained in Section \ref{sec2C}. This paper's goal is to plan an optimal transportation path from $\mathbf{U}_0$ to $\mathbf{U}_F$, which minimizes the mission (travel) time $T$ encompassing both the flight time and the battery swapping time at CSs. This optimization problem is formally presented in Section \ref{sec2D}. The overall model is illustrated in Fig. \ref{Fig1}.

\begin{figure}
\centering
\includegraphics[width=0.95\columnwidth]{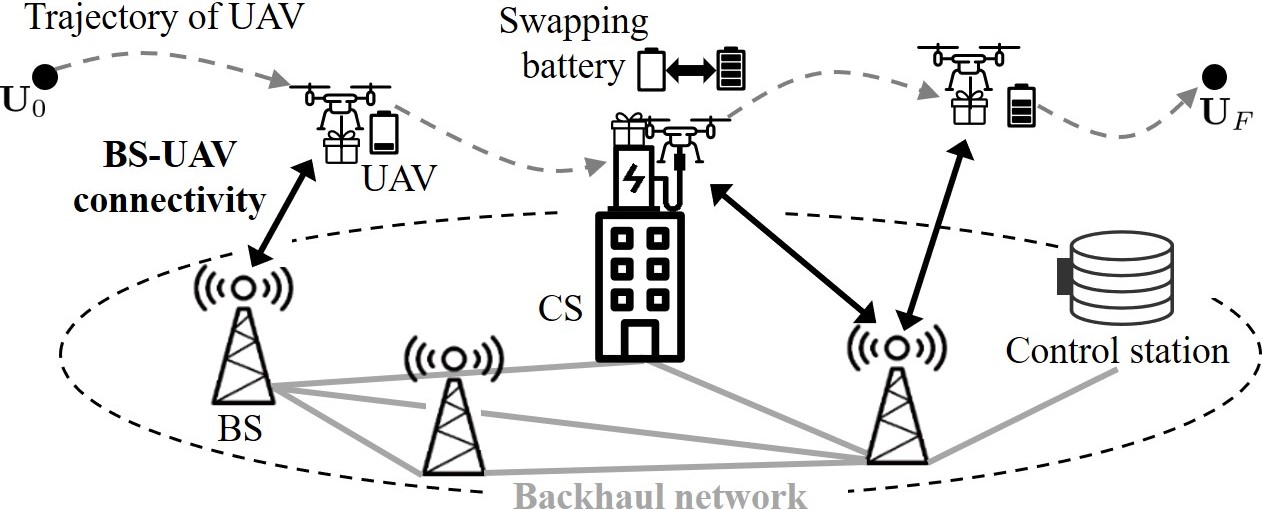}
\caption{An example of delivery scenario of a cellular-enabled UAV. The UAV delivers a payload from  $\mathbf{U}_0$ to $\mathbf{U}_F$ under limited battery constraint while communicating with a BS, where all the BSs and CSs are connected to the control station through the backhaul network. In the CS, the UAV can swap its battery.}\label{Fig1}
\end{figure}

\subsection{UAV Model}\label{sec2A}
A UAV performs a mission to transport a payload from an initial point $\mathbf{U}_0$ to a final point $\mathbf{U}_F$ within the framework of the cellular network, under the assumption that the UAV is equipped with rotary-wings and travels at a fixed altitude $H\in[H_\mathrm{min},H_\mathrm{max}]$, where $H_\mathrm{min}$ is determined by the heights of obstacles in the network and $H_\mathrm{max}$ corresponds to the maximum allowable altitude according to government regulations. The 3-dimensional (3D) coordinates of $\mathbf{U}_0$, $\mathbf{U}_F$, and the position of the UAV at time $t$ are represented by $(x_0,y_0,H)$, $(x_F,y_F,H)$, and $(x(t),y(t),H)$, respectively, where those 2-dimensional (2D) coordinates (i.e., horizontally projected locations) are indicated as $\mathbf{u}_0=(x_0,y_0)$, $\mathbf{u}_F=(x_F,y_F)$, and $\mathbf{u}(t)=(x(t),y(t))$, respectively. While performing the mission, the UAV can adaptively adjust its speed $v(t)\triangleq\|\nabla_t\mathbf{u}(t)\|$ at time $t$ in finite set $\mathcal{V}=\{0,v_1,v_2,...,v_q\}$ ($0<v_1<...<v_q$).
In terms of energy usage, our paper concentrates solely on the energy consumed by the UAV propulsion, by the relatively marginal impact of communication power consumption \cite{Mozaffari:2019}. The overall UAV consists of the UAV body, a battery, and its payload, where those weights are denoted by $w_1$, $w_2$, and $w_3$, respectively with the total UAV weight $w\triangleq w_1+w_2+w_3$.
The power consumed by propulsion (in Watts) during flight at speed $v$ is
\begin{align}
\begin{split}\label{eq:1}
&P_\mathrm{UAV}(v)=P_1\left(1+{3({v}/{v_\mathrm{tip}}})^2\right)+P_2(w)\cdot\\
&\big(\sqrt{1\!+\!0.25{({v}/{v_0(w)}})^4}\!-\!{0.5({v}/{v_0(w)})^2}\big)^{0.5} \!\!\!\!\! +0.5S_\mathrm{FP}{\rho} v^3,
\end{split}
\end{align}
where $v_\mathrm{tip}$, $v_0(w)$, $P_1$, $P_2(w)$, $S_\mathrm{FP}$, and $\rho$ refer to the parameters established through the UAV's physical structure and delivery environment, with pointing out that only $v_0(w)$ and $P_2(w)$ rely on the total weight $w$. A comprehensive discussion regarding the propulsion power consumption \eqref{eq:1} including its parameters will be provided in Section \ref{sec6}. The UAV receives the energy for operation from its battery. The capacity (in Joules) of the battery is directly proportional to its weight $w_2$ as follows \cite{Zhang:2021_2}:
\begin{align}
C_\mathrm{batt}=\epsilon_\mathrm{batt}w_2,\label{eq:2}
\end{align}
where $\epsilon_\mathrm{batt}$ means the energy density (per unit weight) of fully charged battery. From \eqref{eq:1} and \eqref{eq:2}, the maximum flight distance under the assumption that UAV does not swap its battery and maintains a constant speed $v$ is provided as follows \cite{Zhang:2021_2}:
\begin{align}
d_\mathrm{fly}(v)=v\cdot{{\gamma\eta C_\mathrm{batt}}/({r_\mathrm{safe}P_\mathrm{UAV}(v)})},\label{eq:3}
\end{align}
where $\gamma\in(0,1)$ represents the depth of discharge, i.e., the maximum fraction of the energy that can be used in the fully charged battery, $\eta\in(0,1)$ signifies the ratio of the charged energy transferable to the UAV body under the circuit resistance, and $r_\mathrm{safe}\in(1,\infty)$ denotes the energy reserving factor of the battery for unforeseeable situations like strong winds. More precisely, the maximum flight distance $d_\mathrm{fly}(v)$ is obtained by first dividing the actual maximum available energy from the complete charged battery, ${\gamma C_\mathrm{batt}}\over r_\mathrm{safe}$ (in Joules) into the consumed power in the UAV body, ${P_\mathrm{UAV}(v)}\over{\eta}$ (in Watts) and then multiplying the speed $v$.

%\hs{We note that the consumed power in the battery is ${r_\mathrm{safe}P_\mathrm{UAV}}\over{\gamma\eta}$, where the smaller $\gamma$ and $\eta$ and the larger $r_\mathrm{safe}$ reduce the energy consumption efficiency of the UAV.}
%We note that \sh{the smaller $\gamma$ and $\eta$ and the larger $r_\mathrm{safe}$ reduce the energy consumption efficiency of the UAV.}

\subsection{BS-UAV Connectivity}\label{sec2B}
In the cellular network, the UAV can communicate with one of $M\geq 1$ BSs, where the $m$th BS for $m\in[1:M]$ is denoted by $\mathrm{BS}_m$. The 3D coordinate of $\mathrm{BS}_m$ is $(a_{m1},a_{m2},H_\mathrm{BS})$, with its 2D coordinate $\mathbf{a}_m=(a_{m1},a_{m2})$. It is assumed that the altitudes of all BSs are identical to $H_\mathrm{BS}<H$. Each BS is equipped with one omni-directional antenna and operates at the same transmission power denoted by $P_\mathrm{tx}$. We note that there exists a control station to plan the UAV trajectory and approve the handover between BSs. The control station is connected to every BS through a backhaul network as shown in Fig. \ref{Fig1} and hence it is assumed that the control station can maintain the control of the UAV if the communication rate between the UAV and its connected BS is not smaller than a certain threshold.

We assume that the channel between the UAV and a BS is determined by the line-of-sight (LoS) probabilistic model \cite{Al-Hourani:2014}. The LoS probability between the UAV and $\mathrm{BS}_m$ at time $t$, $p_m(t)\in[0,1]$ is given as follows \cite{Al-Hourani:2014}:
\begin{align}
p_m(t)={1/({1+\mu_1\exp(-\mu_2(\theta_m(t)-\mu_1))}}),\label{eq:3.1}
\end{align}
where $\mu_1>0$ and $\mu_2>0$ refer to the parameters to determine the LoS probability and $\theta_m(t)$ is the elevation angle between the UAV and $\mathrm{BS}_m$ at time $t$. We note that the LoS probability $p_m(t)$ increases as the elevation angle $\theta_m(t)$ increases.
The expected path loss between the UAV and $\mathrm{BS}_m$ at time $t$, $\Lambda_m(t)$ (in $\mathrm{dB}$) is given as follows \cite{Al-Hourani:2014_2}:
\begin{align}
\Lambda_m(t) = \mathrm{FSPL}_m(t)+p_m(t)\cdot\zeta_1 + (1-p_m(t))\cdot\zeta_2,\label{eq:4}
\end{align}
%The expected path loss between the UAV and $\mathrm{BS}_m$ at time $t$, $\Lambda_m(t)$ (in $\mathrm{dB}$) is given as $\Lambda_m(t) = \mathrm{FSPL}_m(t)+p_m(t)\cdot\zeta_1 + (1-p_m(t))\cdot\zeta_2$, where $\mathrm{FSPL}_m(t)$ and $p_m(t)\in[0,1]$ are the free space path loss and the LoS probability between the UAV and $\mathrm{BS}_m$ at time $t$, respectively,
where $\mathrm{FSPL}_m(t)$ is the free space path loss (FSPL) which only depends on the distance between the UAV and $\mathrm{BS}_m$ and $\zeta_1>0$ and $\zeta_2>\zeta_1$ refer to the excessive path losses for LoS and non-LoS (NLoS) links, respectively \cite{Al-Hourani:2014_2}. The received signal to interference plus noise ratio (SINR) from $\mathrm{BS}_m$ to the UAV at time $t$ is
\begin{align}
\mathrm{SINR}_m(t)={{P_\mathrm{tx}\cdot 10^{\Lambda_m(t)/10}}/(\!\!\!\!\!{\sum_{m'\in[1:M]\setminus m}\!\!\!\!\!I_{m'm}(t)+N_0})},\label{eq:5}
\end{align}
%The received signal to interference plus noise ratio (SINR) from $\mathrm{BS}_m$ to the UAV at time $t$ is $\mathrm{SINR}_m(t)={{P_\mathrm{tx}\cdot 10^{\Lambda_m(t)/10}}\over {\sum_{m'\in\mathcal{M}\setminus m}I_{m'm}(t)+N_0}}$, 
where $I_{m'm}(t)$ is the interference power by $\mathrm{BS}_{m'}$ at time $t$ when the UAV is communicating with $\mathrm{BS}_{m}$ and $N_0$ is the additive noise power.\footnote{Our path loss model is based on large-scale fading, i.e., small-scale fading effects are ignored. However, we note that our results are also applicable under small-scale fading  by considering the SINR averaged over the randomness.}  Note that $I_{m'm}$ would be equal to zero if $\mathrm{BS}_{m'}$ uses a different frequency band from $\mathrm{BS}_{m}$, 
and even if the two BSs use the same frequency band, it will become negligible if $\mathrm{BS}_{m'}$ is far away from the UAV.

To maintain the control of the UAV, the communication rate from a BS to the UAV should not be less than the minimum required data rate, i.e., \sh{the SINR from $\mathrm{BS}_m$ to the UAV, maximized over $m\in[1:M]$,} should satisfy
% the maximally achievable SINR of the UAV should satisfy 
\begin{align}
\max_{m\in[1:M]} \mathrm{SINR}_m(t)\geq \mathrm{SINR}_\mathrm{th}\label{eq:6}
\end{align}
for any time $t$ where $\mathrm{SINR}_\mathrm{th}$ is the hard SINR threshold to achieve the minimum required data rate. We note that the connectivity constraint \eqref{eq:6} does not mandatorily require LoS links between the UAV and BSs since the SINR value \eqref{eq:5} from a BS to the UAV is determined by averaging the pathloss over the LoS probability between them. In weak interference regime, i.e., the frequency reuse factor is sufficiently low, it can be easily checked that the condition \eqref{eq:6} can be equivalently written as  $\min_{m\in[1:M]}\|\mathbf{u}(t)-\mathbf{a}_m\|\leq d_0$ for some $d_0$, where we call $d_0$ the base coverage radius of each BS.
%\footnote{Each BS has the same base coverage radius $d_0$ since every BS has the same transmission power $P_\mathrm{tx}$ and the same altitude $H_\mathrm{BS}$, but it can be verified that our results also hold under different base coverage radii due to different transmission powers or BS altitudes.}
For other cases, however, it is generally hard to represent the exact coverage region satisfying  \eqref{eq:6} in a simple form. For tractable analysis, we introduce the coverage offset $\lambda_m\in [0,d_0]$ for $\mathrm{BS}_m$ and assume that the UAV can connect with $\mathrm{BS}_m$  if the UAV is in the effective coverage region of $\mathrm{BS}_m$ given as $\|\mathbf{u}(t)-\mathbf{a}_m\|\leq d_0-\lambda_m$, as explained in more detail in Remark \ref{rmk1}. 
Consequently, by introducing offsets $\lambda_m$ taking into account the effect of interference, we assume that \eqref{eq:6} holds if the following equation holds:%\footnote{In Section \ref{sec2B}, we only state the connectivity for downlink communications from a BS to the UAV, but we can set a similar coverage region as \eqref{eq:6} and \eqref{eq:7} for uplink communications.}
\begin{align}
\min_{m\in[1:M]}\|\mathbf{u}(t)-\mathbf{a}_m\|+\lambda_m\leq d_0.\label{eq:7}
\end{align}
Note that the connectivity condition is stated for downlink communications, but it can be defined similarly as in \eqref{eq:6} and \eqref{eq:7} for uplink communications.

\begin{remark}\label{rmk1}
Note that the coverage offset $\lambda_m$ for $m\in[1:M]$ depends on the environment around $\mathrm{BS}_m$. More specifically, $\lambda_m$ is chosen in a way that the UAV can connect with $\mathrm{BS}_m$ as long as its location $\mathbf{u}(t)$ satisfies $\|\mathbf{u}(t)-\mathbf{a}_m\|\leq d_0-\lambda_m$.  Fig. \ref{Fig2} illustrates an example of choosing $\lambda_m$ in the presence of interference. In general, we have large (small) $\lambda_m$ for urban (suburban) environments since there are many (few) other BSs around $\mathrm{BS}_m$. A higher flight altitude of the UAV induces a larger coverage offset $\lambda_m$ since the LOS probability of the channel between the UAV and other BSs increases  \cite{Lin:2019}. Also, $\lambda_m$ increases in the traffic of other BSs around $\mathrm{BS}_m$ \cite{Zhang:2021}.
\end{remark}

\begin{figure}
\centering
\includegraphics[width=0.8\columnwidth]{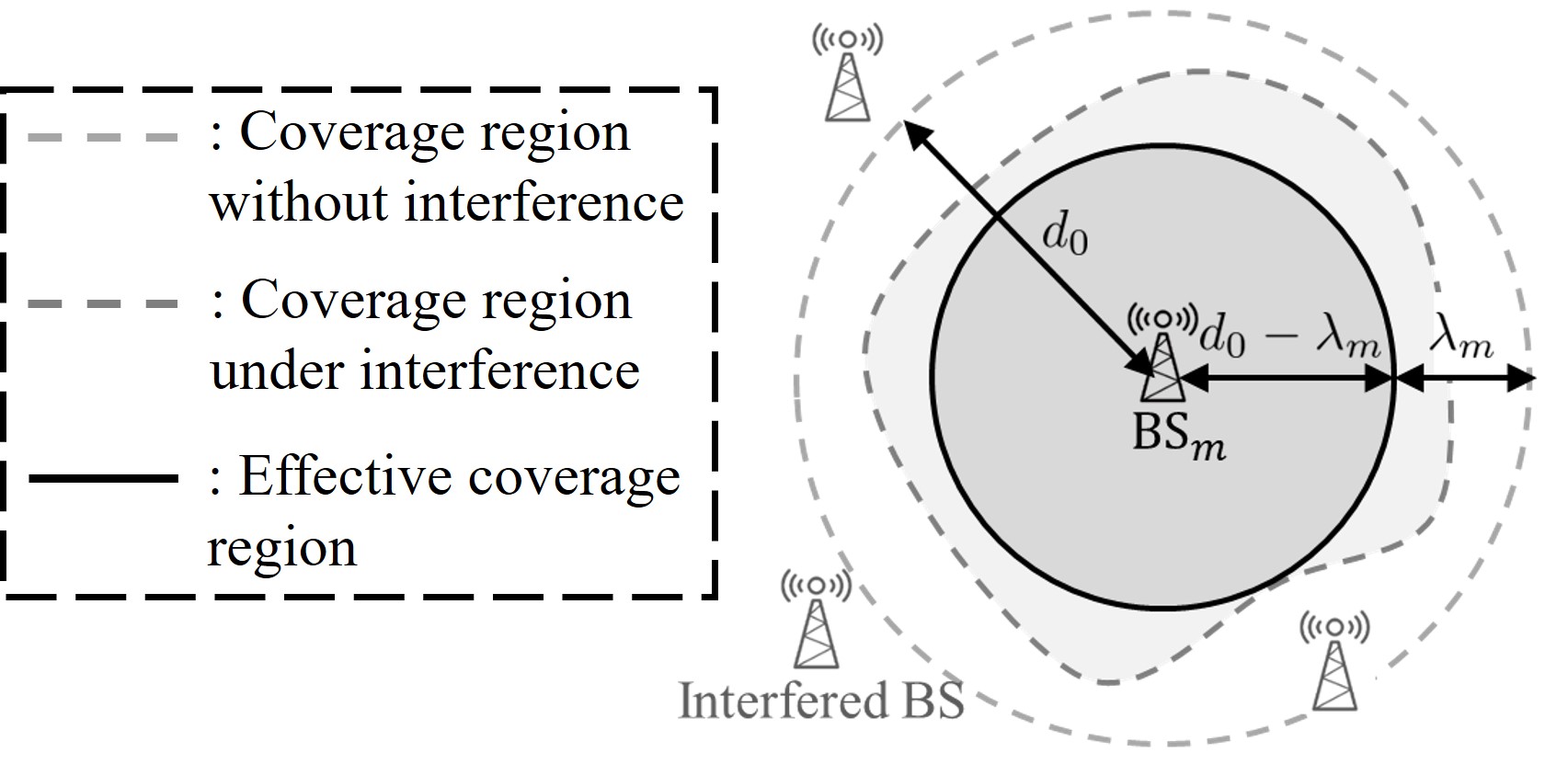}
\caption{An example of the effective coverage region of $\mathrm{BS}_m$ in the presence of interference.}\label{Fig2}
\end{figure}

\subsection{Charging Station Model}\label{sec2C}
To successfully transport the payload in the situation that the initial and the final points are far away and the battery capacity is limited, we assume that the UAV’s depleted battery can be swapped with a completely charged one at a charging station. \sh{It is assumed that all the CSs have the same height of $H_\mathrm{CS}\leq H$.} The $n$th charging station for $n\in[1:N]$ is denoted by $C_n$, and its 3D coordinate is indicated by $(c_{n1},c_{n2},H_\mathrm{CS})$ with the 2D coordinate $\mathbf{c}_n=(c_{n1},c_{n2})$, where $\mathbf{c}_n\neq \mathbf{c}_{n'}$ if $n\neq n'$. 
% It is assumed that the vertical position of all CSs are identical to $H_\mathrm{CS}\leq H$.
To expedite the battery replacement, we assume that every CS employs an automated battery swapping system, as described in \cite{Lee:2015}. %\footnote{The automated battery swapping system in \cite{Lee:2015} takes about $60$ seconds for the entire battery swapping process.} 
The total delay for battery replacement at CS $C_n$, which is denoted as $\tau_{C_n}$ and bounded by the interval $[0,\tau_\mathrm{max}]$, encompasses waiting and battery swapping times (e.g., the automated battery swapping system in \cite{Lee:2015} takes about $60$ seconds for the entire battery swapping process) and the additional penalty for UAV takeoff and landing at the CS. It is worth mentioning that the waiting time can differ in accordance with the congestion level of the charging station. Hence, the delay $\tau_{C_n}$ is contingent upon the CS index $n$. We assume that all CSs are connected to the control station via backhaul network as illustrated in Fig. \ref{Fig1}, which allows the control station to have the complete knowledge of the delay for each CS and enables it to plan an effective UAV trajectory.

\subsection{Goal}\label{sec2D}
Our goal is to plan an optimal transportation path from $\mathbf{U}_0$ to $\mathbf{U}_F$, which minimizes the mission time $T$ that encompasses both the airborne flight time and the delay for battery replacement at CSs. The formulation of the optimization problem is represented as follows:
\begin{align} 
&\textbf{Problem 1} \cr
&\mbox{Objective:~}~~~~ \min_{T\geq 0,\{\mathbf{u}(t),\ \psi(t),\ t\in[0,T]\}} T\label{eq:8}\\
&\mbox{Constraints: }\cr%\label{eq:8.1}
&\mathbf{u}(0)=\mathbf{u}_0,\ E_\mathrm{batt}(0)=C_\mathrm{batt},\ 
 \mathbf{u}(T)=\mathbf{u}_F \label{eq:9}\\  
&\mathbf{u}(t)\in\mathbb{R}^2,\ v(t)\in \mathcal{V},\ \psi(t)\in[0:N],\ t\in[0,T] \label{eq:9.1}\\
&\min_{m\in[1:M]}\|\mathbf{u}(t)-\mathbf{a}_m\|+\lambda_m\leq d_0,\ t\in[0,T]\label{eq:11}\\
&\psi(t)=0\ \mathrm{if~} \mathbf{u}(t)\not\in\{\mathbf{c}_n|n\in[1:N]\},\ t\in[0,T]\label{eq:12}\\
&\psi(t)\in\{0,n\} \ \mathrm{if~}  \mathbf{u}(t)=\mathbf{c}_n,\ n\in[1:N],\ t\in[0,T]\label{eq:13}\\
%\psi(t)\in\{0,n\} \ \mathrm{if~}  \mathbf{u}(t)\in\{\mathbf{c}_n|n\in[1:N]\},\ t\in[0,T]\label{eq:13}\\
&E_\mathrm{batt}(t)\geq (1-(\gamma/ r_\mathrm{safe}))\cdot C_\mathrm{batt},\ t\in[0,T]\label{eq:14}\\
&-\nabla_t  E_\mathrm{batt}(t)=\! P_\mathrm{UAV}(v(t))/\eta\ \mathrm{if~} \psi(t)=0,\ t\in[0,T]\label{eq:15}\\
&-\nabla_t E_\mathrm{batt}(t)=\! 0\ \mathrm{if~} \psi(t)\in [1:N],\ t\in[0,T]\label{eq:16}\\
&E_\mathrm{batt}(t)=C_\mathrm{batt}\ \mathrm{if~} \psi(t)\in[1:N] \text{ and }\cr
&~~t-\max_{t_1}\{t_1|\psi(t_1)=0,t_1\in [0,t]\}=\tau_{C_{\psi(t)}},\ \! t\in[0,T]\label{eq:17}
\end{align}
where the auxiliary indicator $\psi(t)\in[0:N]$ is used to distinguish whether the UAV is airborne $(\psi(t)=0)$ or stays at CS $C_n$ $(\psi(t)=n)$ at time $t$ and $E_\mathrm{batt}(t)\geq 0$ refers to the battery's remaining energy at time $t$. Here, \eqref{eq:9} means that the UAV departs from $\mathbf{u}_0$ with fully charged battery and arrives at $\mathbf{u}_F$ at time $T$, \eqref{eq:9.1} corresponds to the range of optimizing variables including the set $\mathcal{V}$ of possible UAV speeds, \eqref{eq:11} is the connectivity constraint in \eqref{eq:7}, \eqref{eq:12}-\eqref{eq:13} identifies whether the UAV is airborne or stays at a CS, and \eqref{eq:14} means that the actual maximum available energy in the fully charged battery is ${\gamma C_\mathrm{batt}}\over r_\mathrm{safe}$. Next, \eqref{eq:15} and \eqref{eq:16} means the power usage during flight and battery replacement at a CS, respectively, and \eqref{eq:17} signifies that the depleted battery is just swapped to fully charged one.

We point out that Problem 1 doesn't fall under convex optimization due to the variable $\psi(t)\in[1:N]$ with discrete codomain and the non-convex constraint \eqref{eq:11}. Furthermore, optimizing $\mathbf{u}(t)$ and $\psi(t)$ over continuous $t\in[0,T]$ adds to the non-triviality of the problem. To address the challenges, in Sections \ref{sec3} and \ref{sec4}, we initially reframe the problem within a weighted graph methodology. Subsequently, we analytically show that leveraging the Dijkstra algorithm \cite{Dijkstra:1959} facilitates solving the problem efficiently in polynomial time.

\section{Optimal Trajectory with the Connectivity Constraint}\label{sec3}

This section presents an optimal method to solve Problem 1 assuming unlimited battery capacity (i.e., battery constraint is not considered and UAV does not visit any CSs). Without the battery limit, it is observed that the maximum flying speed $v_q$ minimizes the travel time from $\mathbf{U}_0$ to $\mathbf{U}_F$ and hence we assume that the speed is fixed at $v(t)=v_q$ for $t\in[0,T]$. This optimization problem can be reformulated from Problem 1 as follows:
\begin{align} 
&\textbf{Problem 1-1} \cr
&\mbox{Objective: }~~\min_{T\geq 0,\{\mathbf{u}(t),\ t\in[0,T]\}} T\label{eq:19}\\
&\mbox{Constraints:} \cr %\label{eq:19.1}
&\mathbf{u}(0)=\mathbf{u}_0,\ \mathbf{u}(T)=\mathbf{u}_F\label{eq:20}\\  
&\mathbf{u}(t)\in\mathbb{R}^2,\ v(t)=v_q,\ t\in[0,T]\label{eq:21}\\
&\min_{m\in[1:M]}\|\mathbf{u}(t)-\mathbf{a}_m\|+\lambda_m\leq d_0,\ t\in[0,T]\label{eq:22}
\end{align}
We point out that the problem still remains challenging due to the non-convex constraint \eqref{eq:22} and infinite number of control variables. 

To tackle Problem 1-1, we introduce a novel algorithm termed the generalized intersection method, which effectively searches a UAV path that meets the connectivity condition via transforming Problem 1-1 into a corresponding problem of discovering an shortest path over a weighted graph. We further show that the generalized intersection method guarantees an optimal UAV path in polynomial time. The generalized intersection method's pesudo code is delineated in Algorithm \ref{Algo1}. 
%%%%%%%%%%%%%%%%% Algorithm 1: generalized intersection method %%%%%%%%%%%%%%%
\begin{algorithm}
\caption{Generalized Intersection Method} \label{Algo1}
\textbf{Input:} $\mathbf{u}_0$, $\mathbf{u}_F$, $v_q$, $\mathbf{a}_m$, $d_0$, $\lambda_m$ for $m\in[1:M]$
\begin{algorithmic}[1]
\State \textbf{Def:} Function \textbf{ChkFea}($\mathbf{u}_0,\mathbf{u}_F,\mathbf{a}_m,d_0,\lambda_m$ for $m\in[1:M]$)  outputs a binary indicator regarding that Problem 1-1 is solvable $(h_\mathrm{fea}=1)$ or unsolvable $(h_\mathrm{fea}=0)$, where $\mathbf{u}_0$, $\mathbf{u}_F$, $\mathbf{a}_m$, $d_0$, and $\lambda_m$ for $m\in[1:M]$ described in Section \ref{sec2} are the parameters for the delivery environment.
\State \textbf{Def:} Function \textbf{Dijkstra}$(\mathbf{x}_1,\mathbf{x}_2,G)$ for graph $G=(V,E)$ outputs $(T,\mathbf{S}_V)$, where $\mathbf{x}_1$ and $\mathbf{x}_2$ are the vertices in $V$, $T$ denotes the minimum aggregate weight from $\mathbf{x}_1$ to $\mathbf{x}_2$ over the graph $G$, and $\mathbf{S}_V$ represents the corresponding optimal visiting sequence for vertices.
\State $V_0\leftarrow\{\mathbf{u}_0,\mathbf{u}_F\}$ \hfill\Comment{Initial and final points}
\State $E_0\leftarrow \emptyset$
\LeftComment{Step 1. Feasibility Check: Verify whether there exists a path from $\mathbf{u}_0$ to $\mathbf{u}_F$ under connectivity constraint.}
\State $h_\mathrm{fea} \leftarrow$ \textbf{ChkFea}$(\mathbf{u}_0,\mathbf{u}_F,\mathbf{a}_m,d_0,\lambda_m$ for $m\in[1:M])$
\If {$h_\mathrm{fea}=1$} \hfill\Comment{Problem 1-1 has a solution.}
    % 시작점, 도착점 V_0에 추가
    \LeftComment{Step 2. Vertex construction: Construct a set $V_0$ of vertices comprising the initial, final, and intersection points.}
    \For{$m,m'\in[1:M]$, $m<m'$} 
        \If{$\|\mathbf{a}_{m}-\mathbf{a}_{m'}\|\leq (d_0-\lambda_{m})+(d_0-\lambda_{m'})$}
            \State $V_0\leftarrow V_0\cup\{\mathbf{x}\in\mathbb{R}^2 \vert \ \|\mathbf{x}-\mathbf{a}_{m}\|=d_0-\lambda_{m},$
            \Statex \qquad\qquad \ $\|\mathbf{x}-\mathbf{a}_{m'}\|=d_0-\lambda_{m'}\}$
        \EndIf
    \EndFor 
    \LeftComment{Step 3. Edge construction: Construct a set $E_0$ of edges comprising the line segments falling within the total coverage map.}
    \For{$\mathbf{x}_1,\mathbf{x}_2\in V_0$, $\mathbf{x}_1\neq \mathbf{x}_2$}
        \State $h_\mathrm{out}\leftarrow\textbf{ChkOut}(\mathbf{x}_1,\mathbf{x}_2, \mathbf{a}_m, d_0,  \lambda_m$ for $m\in$
        \Statex \qquad\quad $[1:M])$
        \If {$h_\mathrm{out}=0$}
            \State $E_0\leftarrow E_0\cup (\mathbf{x}_1,\mathbf{x}_2,\|\mathbf{x}_1-\mathbf{x}_2\|/v_q)$
        \EndIf
    \EndFor 
    \LeftComment{Step 4. Path search: Search an optimal path from the initial point to the final point over the graph $G_0$.}
    \State $G_0\leftarrow(V_0,E_0)$ \hfill\Comment{Construct graph $G_0$}
    \State $(T,\mathbf{S}_{V_0})\leftarrow$ \textbf{Dijkstra}$(\mathbf{u}_0,\mathbf{u}_F,G_0)$
    \State $\mathbf{u}(t) \text{ for }t\in[0,T]\leftarrow$ \textbf{FindPath}$(\mathbf{S}_{V_0},v_q)$
\Else \Comment{Problem 1-1 does not have a solution.}
\State $T\leftarrow\infty$, $\mathbf{u}(t)\leftarrow \mathrm{Empty}$ for $t\in[0,T]$
\EndIf
\end{algorithmic}
\textbf{Output:} \big($h_\mathrm{fea}$, $T$, $\mathbf{u}(t)$ for $t\in[0,T]$\big)
\end{algorithm}
%%%%%%%%%%%%%%%%%%%%%%%%
The algorithm initially verifies (in line $5$) whether there exists a path from the initial point $\mathbf{u}_0$ to the final point $\mathbf{u}_F$ (i.e., the problem is solvable) or not via the checking feasibility function ChkFea, that outputs a binary indicator regarding that Problem 1-1 is solvable $(h_\mathrm{fea}=1)$ or unsolvable $(h_\mathrm{fea}=0)$ under the inputs consisting of the initial and the final points, and the communication environment including the locations of the BSs and their effective coverage regions. We note that the function ChkFea can be established by leveraging \cite[Proposition~1]{Zhang:2019}, particularly when accommodating varying coverage radii among the BSs. For brevity, its pseudo code is excluded. If $h_\mathrm{fea}=1$ (i.e., the problem is solvable), we proceed to set up an undirected weighted graph $G_0=(V_0,E_0)$ by utilizing the intersection points of the effective coverage boundaries. In particular, the vertex set $V_0$ is composed of the initial point $\mathbf{u}_0$, the final point $\mathbf{u}_F$, and all intersection points (in lines $7$-$11$).
The edge set involves a line segment $\overline{\mathbf{x}_1\mathbf{x}_2}$ connecting two distinct vertices $\mathbf{x}_1,\mathbf{x}_2\in V_0$ that falls within the total coverage map, which refers to the set of all the effective coverage regions of BSs (in lines $12$-$17$). We note that whether a line segment falls within the total coverage map can be validated via the checking outage function ChkOut, where its pesudo code is delineated in Algorithm \ref{Algo3} and elaborated upon subsequently. This edge is represented as a tuple $(\mathbf{x}_1,\mathbf{x}_2,\|\mathbf{x}_1-\mathbf{x}_2\|/{v_q})$, where the edge weight $\|\mathbf{x}_1-\mathbf{x}_2\|/{v_q}$ signifies the travel time over the line segment $\overline{\mathbf{x}_1\mathbf{x}_2}$. Following the construction of the graph $G_0=(V_0,E_0)$, we then derive an optimal UAV trajectory from $\mathbf{u}_0$ to $\mathbf{u}_F$ and the corresponding optimal travel time $T$ over the graph (in lines $19$-$20$). We initially determine an optimal visiting sequence $\mathbf{S}_{V_0}$ for vertices over the graph and its aggregate weight (i.e., travel time $T$) through the Dijkstra algorithm \cite{Dijkstra:1959}, which outputs a path with minimum aggregate weight from a vertex to another vertex over a weighted graph with polynomial complexity. After that, the final UAV trajectory is determined using the finding path function FindPath, that outputs the trajectory associated with the optimal visiting sequence $\mathbf{S}_{V_0}$ and the maximum flying speed $v_q$. The function FindPath can be implemented in a similar manner as outlined in \cite[$(25)$-$(27)$]{Zhang:2019}, where the pseudo code is excluded for brevity. An example of the graph $G_0$ and the corresponding optimal trajectory by Algorithm \ref{Algo1} is illustrated in Fig. \ref{Fig3}.

%Note that only BS pairs $(\mathbf{a}_m,\mathbf{a}_{m'})$ where $m,m'\in\mathcal{M}$ and $m<m'$ have the intersected points if it satisfies the following condition:
%\begin{align}
%\|\mathbf{a}_m-\mathbf{a}_{m'}\|\leq 2d_0-\lambda_m-\lambda_{m'}.\label{eq:22}
%\end{align}

\begin{figure}[t]
\centering
\includegraphics[width=0.9\columnwidth]{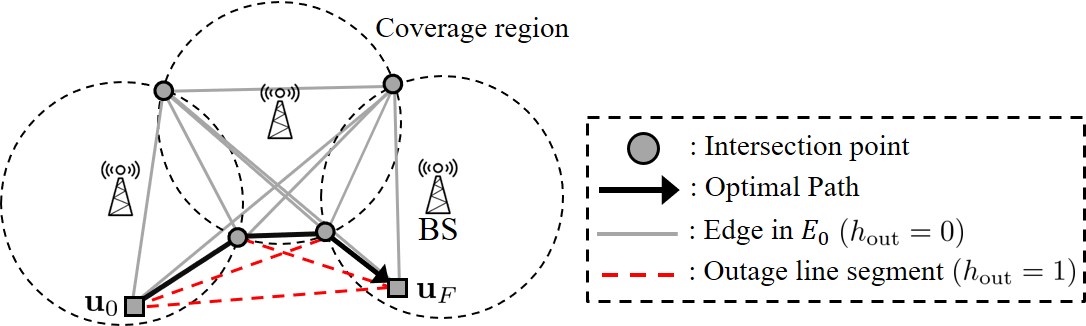}
\caption{An example of graph $G_0$ for $M=3$. The graph has vertex set $V_0$ including $\mathbf{u}_0$, $\mathbf{u}_F$, and all intersection points and edge set $E_0$ including the (solid) line segments between two vertices which fall within the total coverage map.}\label{Fig3}
\end{figure}

% Algorithm 3 동작 원리 설명
Algorithm \ref{Algo3} outlines the function ChkOut that evaluates whether each line segment $\overline{\mathbf{x}_1\mathbf{x}_2}$ connecting two distinct vertices $\mathbf{x}_1,\mathbf{x}_2\in V_0$ falls within the total coverage map. We define that the line segment experiences an outage if at least one of $\xi\in[0,1]$ meets the following inequality:
\begin{align}
\min_{m\in\mathcal[1:M]}\|\pmb{\alpha}(\xi)-\mathbf{a}_m\|+\lambda_m>d_0,\label{eq:25}
\end{align}
where $\pmb{\alpha}(\xi)\triangleq \mathbf{x}_1+\xi(\mathbf{x}_2-\mathbf{x}_1)$ for $\xi\in[0,1]$ denotes a point along the line segment $\overline{\mathbf{x}_1\mathbf{x}_2}$. Here, \eqref{eq:25} signifies that the point $\pmb{\alpha}(\xi)$ does not fall within the total coverage map, i.e., the UAV can not establish a connection with any BS. To test whether the line segment experiences an outage, the function ChkOut confirms the presence of a $\xi\in[0,1]$ that fulfills \eqref{eq:25}. We first define the safe interval $\mathcal{T}_\mathrm{safe}\triangleq [0,\xi']$ for a $\xi'\in[0,1]$ as the line segment between $\mathbf{x}_1$ and $\pmb{\alpha}(\xi')$ where every $\pmb{\alpha}(\xi)$ for $\xi\in \mathcal{T}_\mathrm{safe}$ has been verified to fall within the total coverage map. In other words, any $\xi\in \mathcal{T}_\mathrm{safe}$ satisfies $\|\pmb{\alpha}(\xi)-\mathbf{a}_m\|+\lambda_m \leq d_0$ at least for an $m\in[1:M]$. This function initially verifies the presence of $\xi=0$ within the total coverage map. Subsequently, it iteratively updates the safe interval or proclaims an outage as the following procedures. Let us assume that the existing safe interval is provided as $[0,\xi']\subseteq [0,1]$. If the point $\pmb{\alpha}(\xi'+\epsilon)$ falls within the effective coverage region of $\mathrm{BS}_m$, where $\epsilon>0$ is a sufficiently small positive constant, then we extend the safety interval through encompassing every $\xi\in[0,1]$ wherein $\pmb{\alpha}(\xi)$ falls within the effective coverage region of $\mathrm{BS}_m$, i.e.,  $\|\pmb{\alpha}(\xi)-\mathbf{a}_m\|\leq d_0-\lambda_m$. 
The function ChkOut terminates either when $\pmb{\alpha}(\xi'+\epsilon)$ does not fall within the total coverage map  $(h_\mathrm{out}=1)$ or the safe interval $\mathcal{T}_\mathrm{safe}$ eventually spans the entire range of $[0,1]$ $(h_\mathrm{out}=0)$. It outputs a binary indicator regarding that the line segment $\overline{\mathbf{x}_1\mathbf{x}_2}$ falls within the total coverage map $(h_\mathrm{out}=0)$ or experiences an outage $(h_\mathrm{out}=1)$.\footnote{We note that the function ChkOut allows an outage over a path with length up to $\epsilon\|\mathbf{x}_2-\mathbf{x}_1\|$. To make such an outage  probability negligible, it is needed to select a sufficiently small $\epsilon>0$.} 
An example of updating the safe interval is shown in Fig. \ref{Fig4}.

\begin{remark}\label{rmk2}
The idea of repeatedly updating the safe interval is similar to the short-cut edge concept in \cite[Algorithm~2]{Chapnevis:2021}, which allows  communication outage duration up to a certain threshold for edge construction. Since our algorithm does not permit for an edge to experience an outage, the maximum number of updates to the safe interval is reduced from $2M$ to $M$ compared to the algorithm in \cite{Chapnevis:2021}.
\end{remark}

%%%%%%% Algorithm 3: Checking Outage %%%%%%%%%%
\begin{algorithm}[t]
\caption{Function ChkOut} \label{Algo3}
\textbf{Input:} $\mathbf{x}_1,\mathbf{x}_2\in V_0$, $\mathbf{a}_m$, $d_0$, $\lambda_m$ for $m\in[1:M]$
\begin{algorithmic}[1]
\State\textbf{Def:} $\pmb{\alpha}(\xi)\triangleq \mathbf{x}_1+\xi(\mathbf{x}_2-\mathbf{x}_1)$ for $\xi\in[0,1]$
\LeftComment{$\epsilon$ is a sufficiently small positive constant.}
\State $\xi'\leftarrow 0$, $\xi''\leftarrow 0$, $h_\mathrm{out}\leftarrow 0$, $\epsilon\leftarrow 10^{-10}$
\While{$\xi'<1$}   %\hfill\Comment{Edge $(\mathbf{x}_1,\mathbf{x}_2)$ is covered if $\xi'=1$}  
\LeftComment{Update safe interval $\mathcal{T}_\mathrm{safe}$ from 
$[0,\xi']$ to $[0,\xi'']$ if $\pmb{\alpha}(\xi'+\epsilon)$ falls within the total coverage map.}
    \For {$m\in\mathcal[1:M]$}  \hfill\Comment{Find BS covering $\pmb{\alpha}(\xi'+\epsilon)$.}
        \If{$\|\pmb{\alpha}(\xi'+\epsilon)-\mathbf{a}_m\|\leq d_0-\lambda_m$}
            \State $\xi''\!\leftarrow\!\max\{\xi\in [0,1] \vert \ \|\pmb{\alpha}(\xi)-\mathbf{a}_m\|\leq d_0-\lambda_m\}$
            %\State $\mathcal{M''}\leftarrow\mathcal{M'}\setminus m$
            \State \textbf{break} 
        \EndIf
    \EndFor
    \If{$\xi''=\xi'$}  \hfill\Comment{ $\pmb{\alpha}(\xi'+\epsilon)$ experiences an outage.}
         \State $h_\mathrm{out}\leftarrow 1$
         \State  \textbf{break}
    \EndIf
    \State $\xi'\leftarrow\xi''$
    %\If{$\mathcal{M''}=\mathcal{M'}$}  \hfill\Comment{$\pmb{\alpha}(\xi+0)$ is not covered}
    %    \State $h_\mathrm{out}\leftarrow 1$
    %    \State \textbf{break}
    %\EndIf
    %\State $\mathcal{M'}\leftarrow\mathcal{M''}$
\EndWhile \hfill\Comment{$\xi'=1$ indicates that $\mathcal{T}_\mathrm{safe}=[0,1]$.}
\end{algorithmic}
\textbf{Output:} $h_\mathrm{out}$
\end{algorithm}
%%%%%%%%%%%%%%%%%%%%%%%%%%%%%%%%%%%%%%%%

\begin{figure}
\centering
\includegraphics[width=0.8\columnwidth]{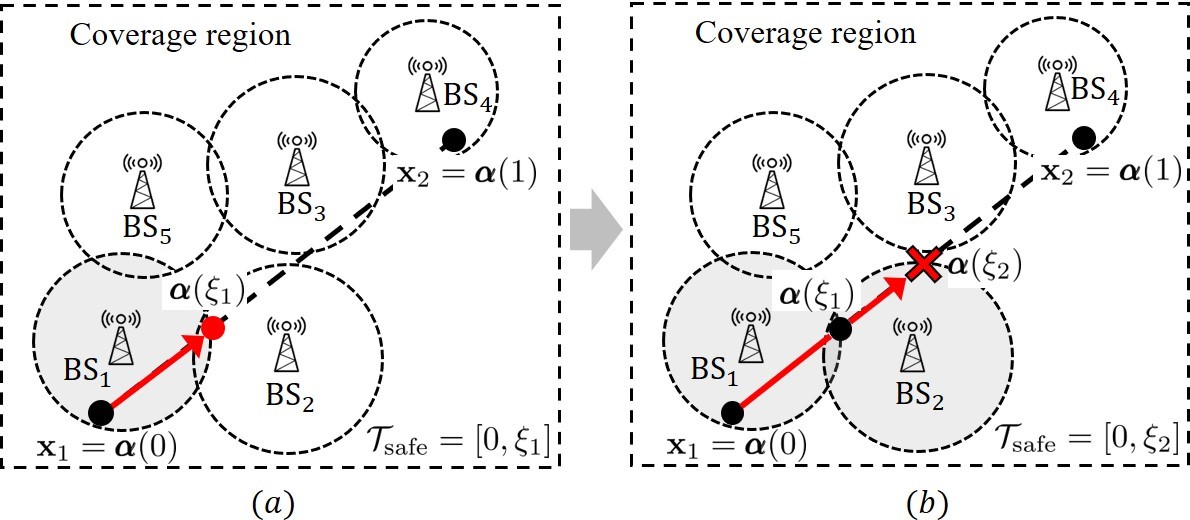}
\caption{An example of updating the safe interval $\mathcal{T}_\mathrm{safe}$ in line segment $\overline{\mathbf{x}_1\mathbf{x}_2}$ for $M=5$. $(a)$ It first checks whether $\pmb{\alpha}(0)$ is connected with $\mathrm{BS}_1$ and then refines the safe interval $\mathcal{T}_\mathrm{safe}=[0,\xi_1]$ by considering the coverage region of $\mathrm{BS}_1$. $(b)$ Since the UAV at $\pmb{\alpha}(\xi_1+\epsilon)$ is connected with $\mathrm{BS}_2$, the safe interval is updated to $\mathcal{T}_\mathrm{safe}=[0,\xi_2]$. Next, the UAV at $\pmb{\alpha}(\xi_2+\epsilon)$ is not connected with every BS and hence $h_\mathrm{out}=1$.}\label{Fig4}
\end{figure}

Now, the following theorems show that our generalized intersection method yields an optimal solution of Problem 1-1 in polynomial time.%\footnote{We note that the intersection points are crucial for deriving an optimal trajectory as shown in Fig. \ref{Fig5}.}

\begin{theorem}\label{Thm1}
The generalized intersection method outputs an optimal solution for Problem 1-1.
\end{theorem}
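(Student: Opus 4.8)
The plan is to exploit that fixing $v(t)=v_q$ reduces Problem 1-1 to a purely geometric \emph{Euclidean shortest path} problem. Since the travel time equals the path length divided by $v_q$, minimizing $T$ is equivalent to minimizing the length of a curve from $\mathbf{u}_0$ to $\mathbf{u}_F$ that stays inside the feasible (total coverage) region prescribed by \eqref{eq:22},
\begin{align}
\mathcal{R}\triangleq\bigcup_{m=1}^{M}D_m,\quad D_m\triangleq\{\mathbf{x}\in\mathbb{R}^2:\|\mathbf{x}-\mathbf{a}_m\|\leq d_0-\lambda_m\}.\nonumber
\end{align}
When $h_\mathrm{fea}=1$, the set $\mathcal{R}$ is a nonempty compact region in which $\mathbf{u}_0$ and $\mathbf{u}_F$ lie in a common connected component, so a length-minimizing feasible curve exists (by compactness together with the lower semicontinuity of length). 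I would first record this reduction, and then shrink the search space by showing that some optimal curve is a polygonal path whose only turning points are circle--circle intersection points, i.e., vertices already collected in $V_0$.

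The heart of the argument, and the step I expect to be the main obstacle, is characterizing where an optimal path may bend; I would establish it by local shortcutting. Write an optimal path as a concatenation of maximal straight segments and call a \emph{bend} a point where the direction changes. (i) If a bend lay in the interior of $\mathcal{R}$, a short straight shortcut across the corner would stay in $\mathcal{R}$ and be strictly shorter, a contradiction; hence every bend lies on $\partial\mathcal{R}$. (ii) The boundary $\partial\mathcal{R}$ is a union of circular arcs, each belonging to a single $\partial D_m$. At a boundary point lying on only one such circle, the region is locally the convex disk $D_m$, so the chord joining the two incident segments near the bend stays inside $D_m\subseteq\mathcal{R}$ by convexity and is again strictly shorter. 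Consequently the only admissible turning points are where two arcs meet, namely the reflex (inward-pointing) corners of $\mathcal{R}$, which form a subset of the intersection points $\partial D_m\cap\partial D_{m'}$. Since each pair of circles meets in at most two points, these candidates are finite, and lines $7$--$11$ place every pairwise intersection into $V_0$; this is a superset of the turning points, which is harmless for optimality. The delicate cases to treat carefully are tangential contact, degenerate coincident intersections, and verifying that wrapping along a convex arc is never beneficial, since the obstacles $\mathbb{R}^2\setminus\mathcal{R}$ are bounded by arcs that are concave toward the obstacle and are therefore never followed by a geodesic.

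Granting this characterization, the rest is bookkeeping over $G_0=(V_0,E_0)$. Let an optimal polygonal path have turning points $\mathbf{u}_0=\mathbf{p}_0,\mathbf{p}_1,\dots,\mathbf{p}_k=\mathbf{u}_F$ with each $\mathbf{p}_i\in V_0$. Each segment $\overline{\mathbf{p}_i\mathbf{p}_{i+1}}$ lies in $\mathcal{R}$ because the path is feasible, so by the correctness of ChkOut --- which, as described above, returns $h_\mathrm{out}=0$ exactly when $\overline{\mathbf{x}_1\mathbf{x}_2}$ falls within the total coverage map --- each segment is an edge of $E_0$ with weight $\|\mathbf{p}_i-\mathbf{p}_{i+1}\|/v_q$. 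Thus the optimal path maps to a walk in $G_0$ whose aggregate weight equals the optimal time $T^\star$, so the Dijkstra weight is at most $T^\star$. Conversely, every edge of $E_0$ is an in-region segment, so any $\mathbf{u}_0$--$\mathbf{u}_F$ walk in $G_0$ concatenates into a feasible trajectory of equal travel time at speed $v_q$, whence the Dijkstra weight is at least $T^\star$. The two bounds coincide, so the returned $T$ equals $T^\star$ and FindPath reconstructs a feasible trajectory attaining it, which proves optimality.
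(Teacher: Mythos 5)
Your proposal is correct and its engine is the same as the paper's: a local shortcutting/triangle-inequality argument showing that any bend of an optimal path that is not a circle--circle intersection point can be replaced by a strictly shorter chord, after which optimality of Dijkstra on $G_0$ finishes the job. The difference is in what is proved versus imported. The paper leans on \cite[Proposition~3]{Zhang:2019} to assert up front that an optimal path is polygonal with breakpoints confined to overlapping coverage regions, and then only has to rule out non-intersection breakpoints, which it does with exactly your interior-ball shortcut ($\mathcal{R}_\delta$ around $\mathbf{x}_\mathrm{br}$ plus the strict triangle inequality). You instead re-derive the whole characterization from scratch: you add an existence argument (compactness and lower semicontinuity of length), you separately handle a bend lying on the boundary of a single disk by invoking local convexity of $D_m$ (a case the paper never has to confront because of the cited proposition), and you flag the tangency/arc-hugging degeneracies that the citation silently absorbs. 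You also spell out both directions of the correspondence between feasible trajectories and walks in $G_0$, which the paper leaves implicit in its appeal to ChkOut and Dijkstra. The trade-off: the paper's proof is shorter but not self-contained; yours is self-contained and more careful at the boundary, at the cost of having to discharge the "geodesics never follow a convex arc" claim, which you correctly identify as the remaining delicate point but do not fully write out.
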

\begin{proof}
It was previously shown in \cite[Proposition~3]{Zhang:2019} that an optimal solution of Problem 1-1 consists of line segments, where its breakpoints are selected in the overlapping regions of the coverage regions of two different BSs. However, under the approach in \cite{Zhang:2019}, we can not derive an optimal solution of the problem via a graph-theoretic approach, since there are infinite number of possible points which can serve as breakpoints in each overlapping region, and the number of vertices in a graph should be finite to apply the Dijkstra algorithm \cite{Dijkstra:1959}. Following the result of \cite{Zhang:2019}, in this proof, we show that the breakpoints of an optimal path should be selected in the intersection points of the coverage boundaries of BSs in order to justify adopting a graph theory-based approach in our algorithm. Note that the problem is equivalent to deriving a path which achieves the shortest distance under the connectivity constraint since the speed of the UAV is fixed at $v_q$.

For a proof by contradiction, let us assume that an optimal path of the UAV has a breakpoint $\mathbf{x}_\mathrm{br}$ in the overlapping region of $\mathrm{BS}_1$ and $\mathrm{BS}_2$ except the corresponding intersection points. Then, there exists sufficiently small  $\delta>0$ that the set  $\mathcal{R}_\delta\triangleq\{\mathbf{x}\in\mathbb{R}^2| \ \|\mathbf{x}-\mathbf{x}_\mathrm{br}\|\leq \delta\}$ is included in the total coverage map because $\|\mathbf{x}_\mathrm{br}-\mathbf{a}_m\|<d_0-\lambda_m$ at $m=1$ or $2$, 
%\begin{align}
%\|\mathbf{x}_\mathrm{br}-\mathbf{a}_m\|<d_0-\lambda_m \text{  at  } m=1 \text{ or }2.\label{eq:26}
%\end{align}
which means that the point $\mathbf{x}_\mathrm{br}$ is inside the coverage region of $\mathrm{BS}_1$ or $\mathrm{BS}_2$ except its coverage boundary. Now, let us denote $\pmb{\beta}_1$ and $\pmb{\beta}_2$ as two intersections of the boundary of $\mathcal{R}_\delta$ and the path of the UAV. Then, $\|\pmb{\beta}_1-\pmb{\beta}_2\|< \|\pmb{\beta}_1-\mathbf{x}_\mathrm{br}\|+\|\mathbf{x}_\mathrm{br}-\pmb{\beta}_2\|$ by triangular inequality, where we note that only strict inequality holds since the point $\mathbf{x}_\mathrm{br}$ is a breakpoint of the path of the UAV.
%Then the following holds by triangular inequality:
%\begin{align}
 %\|\pmb{\beta}_1-\pmb{\beta}_2\|< \|\pmb{\beta}_1-\mathbf{x}_\mathrm{br}\|+\|\mathbf{x}_\mathrm{br}-\pmb{\beta}_2\|.\label{eq:27}
%\end{align}
The path of the UAV includes the line segments $\overline{\pmb{\beta}_1\mathbf{x}_\mathrm{br}}$ and $\overline{\mathbf{x}_\mathrm{br}\pmb{\beta}_2}$. Hence, it is a contradiction that the path is an optimal solution for Problem 1-1 because the overall length of the path can be strictly decreased by substituting $\overline{\pmb{\beta}_1\pmb{\beta}_2}$ for $\overline{\pmb{\beta}_1\mathbf{x}_\mathrm{br}}$ and $\overline{\mathbf{x}_\mathrm{br}\pmb{\beta}_2}$ as shown in Fig. \ref{Fig6}.
\end{proof}
\begin{comment}
\begin{figure}
\centering
\includegraphics[width=0.9\columnwidth]{}
\caption{\ky{An example of optimal trajectory for $M=2$. The optimal trajectory for arbitrary $\mathbf{u}_0$ and $\mathbf{u}_F$ is a (dashed) direct path or consists of two (solid) line segments through an intersection point.}}\label{Fig5}
\end{figure}
\end{comment}
\begin{figure}
\centering
\includegraphics[width=0.8\columnwidth]{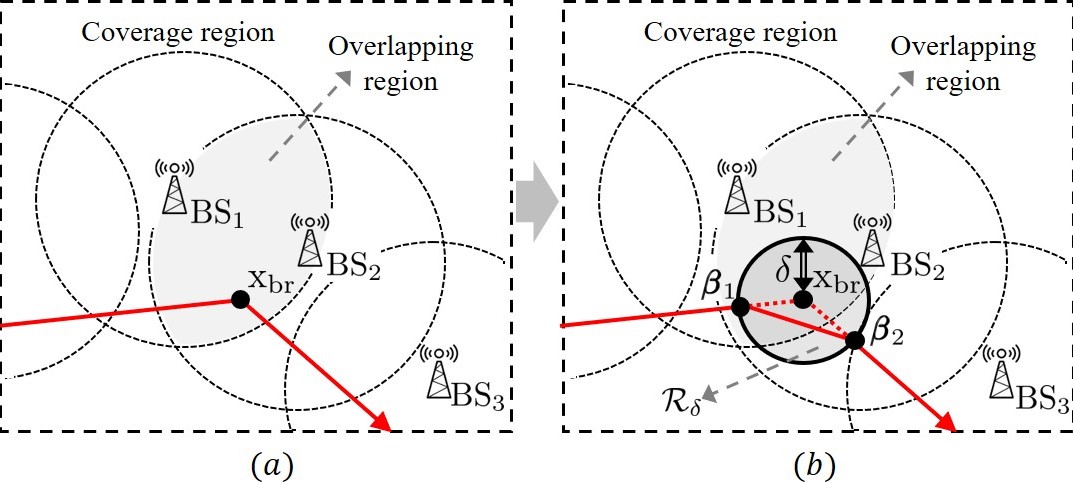}
\caption{An example of proof of Theorem \ref{Thm1}. In Fig. \ref{Fig6}-$(a)$, the breakpoint $\mathbf{x}_\mathrm{br}$ of the path is not an intersection point. This path is not an optimal solution of Problem 1-1 since there exists a shorter path compared to the path in \ref{Fig6}-$(a)$ as shown in \ref{Fig6}-$(b)$.}\label{Fig6}
\end{figure}

\begin{table*}
\caption{Comparison of algorithms for Problem 1-1}\label{Tab1}
\centering
\begin{tabular}{@{} c || c | c @{}}
\cline{1-3}
Algorithm & Complexity & Performance gap\\ \cline{1-3}
Exhaustive search \cite{Zhang:2019} & $O(M!M^{3.5})$ & 0\\ \cline{1-3}
Exhaustive search with fixed association \cite{Zhang:2019} & $O(M^{3.5})$ & $O(Md_0/{v_q})$ \\ \cline{1-3}
Exhaustive search with quantization \cite{Zhang:2019} & $O(M^4Q^2)$ & $O((Md_0/{v_q})\sin(1/{Q}))$ \\ \cline{1-3}
Intersection method \cite{Chen:2020} by checking outages via Algorithm \ref{Algo3} & $O(M^4)$ & $O(Md_0/{v_q})$ \\ \cline{1-3}
Ours (Generalized intersection method)  & $O(M^6)$ & $0$ \\ \cline{1-3}
\end{tabular}
\end{table*}

\begin{theorem}\label{Thm2}
The time complexity of the generalized intersection method is $O(M^6)$.
\end{theorem}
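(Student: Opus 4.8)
The plan is to bound the running time of each of the four steps of Algorithm~\ref{Algo1} separately and then show that the edge-construction step dominates, yielding $O(M^6)$. First I would count the vertices produced in Step~2. For each of the $\binom{M}{2}=O(M^2)$ pairs $(m,m')$ with $m<m'$, the two effective coverage circles $\|\mathbf{x}-\mathbf{a}_m\|=d_0-\lambda_m$ and $\|\mathbf{x}-\mathbf{a}_{m'}\|=d_0-\lambda_{m'}$ intersect in at most two points, each computable in $O(1)$ time. Hence $|V_0|=O(M^2)$ (the two extra vertices $\mathbf{u}_0,\mathbf{u}_F$ are absorbed into this bound), and Step~2 costs $O(M^2)$.

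The main work, and the step I expect to govern the overall complexity, is the edge construction in Step~3. The loop there ranges over all pairs of distinct vertices in $V_0$, of which there are $O(|V_0|^2)=O(M^4)$, and for each candidate segment $\overline{\mathbf{x}_1\mathbf{x}_2}$ it invokes the function ChkOut (Algorithm~\ref{Algo3}). I would next bound the cost of a single ChkOut call. By Remark~\ref{rmk2}, the safe interval $\mathcal{T}_\mathrm{safe}$ is extended at most $M$ times, so the while-loop executes $O(M)$ iterations; each iteration scans the $M$ base stations, performing an $O(1)$ distance test and, when a covering BS is found, an $O(1)$ computation of the largest $\xi$ with $\|\pmb{\alpha}(\xi)-\mathbf{a}_m\|\le d_0-\lambda_m$. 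Thus each ChkOut call runs in $O(M^2)$ time, and Step~3 costs $O(M^4)\cdot O(M^2)=O(M^6)$.

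Finally I would verify that the remaining steps are subdominant. The Dijkstra call in Step~4 runs on a graph with $|V_0|=O(M^2)$ vertices and $|E_0|=O(M^4)$ edges, so a standard implementation terminates in $O(|V_0|^2)=O(M^4)$ time (or $O(M^4\log M)$ with a binary heap), and FindPath merely concatenates the $O(M^2)$ segments of the returned vertex sequence. The feasibility check ChkFea in Step~1, built from \cite[Proposition~1]{Zhang:2019}, is likewise polynomial and dominated by $O(M^6)$. Collecting the bounds, the overall complexity is determined by the edge-construction step and equals $O(M^6)$, as claimed. The crux of the argument is the per-call bound on ChkOut: the nonroutine point is precisely the observation of Remark~\ref{rmk2} that at most $M$ safe-interval updates occur, which both guarantees termination of the while-loop and keeps each call to $O(M^2)$ rather than a larger cost.
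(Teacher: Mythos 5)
Your proposal is correct and follows essentially the same argument as the paper's proof: $|V_0|=O(M^2)$ from the pairwise circle intersections, a per-call bound of $O(M^2)$ on ChkOut via the at-most-$M$ safe-interval updates each costing $O(M)$, an $O(M^4)$ count of candidate segments making edge construction the dominant $O(M^6)$ step, and subdominant $O(M^2)$ and $O(M^4)$ costs for the feasibility check and Dijkstra, respectively.
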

\begin{proof}
Let us first state the cardinality of the set $|V_0|=O(M^2)$. The steps in Algorithm \ref{Algo1} have the following complexities:
\begin{itemize}
\item Complexity of function ChkFea: It was shown that the complexity to check whether Problem 1-1 is feasible is $O(M^2)$ \cite{Zhang:2019}.
\item Step 1. Vertex construction: This step has complexity $O(M^2)$ since the intersection points of the coverage boundaries by a BS pair is derived by calculating the quadratic equations in line $9$ of Algorithm \ref{Algo1} and the number of the possible BS pairs is $O(M^2)$.
\item Step 2. Edge construction: First, we derive the complexity of testing whether a line segment experiences an outage via Algorithm \ref{Algo3}. The number of updates to the safe interval (i.e., the number of iterations of the \textbf{while} loop) is at most $M$ since in each update, the BSs used in previous updates are not selected again. For each update, the complexity order to derive the next $\xi'$ (i.e., $\xi''$) is $O(M)$ since the \textbf{for} loop in lines $4$-$9$ is repeated at most $M$ times. Hence, the complexity of Algorithm \ref{Algo3} is $M\cdot O(M)=O(M^2)$. Next, we find the number of the line segments that need to be tested. Since the number of the vertices in the graph $G_0$ is $|V_0|=O(M^2)$, the number of the line segments is $|V_0|^2=O(M^4)$. Consequently, the complexity of testing all line segments to construct the edge set $E_0$ is  $O(M^2)\cdot O(M^4)=O(M^6)$. 
\item Step 3. Path search: The complexity of the Dijkstra algorithm in the graph $G_0$ is $O(|V_0|^2)=O(M^4)$ \cite{West:2001}.
\end{itemize}
Consequently, the complexity of the generalized intersection method is $O(M^6)$, which is dominated at the edge $E_0$ construction step.
\end{proof}

Now, we turn our attention to compare our generalized intersection method with several existing algorithms in \cite{Zhang:2019,Chen:2020} addressing Problem 1-1. Table \ref{Tab1} offers an overview of the time complexity order and the gap in performance from the optimal mission time across the algorithms. We present succinct explanations of past algorithms and insights derived from Table \ref{Tab1}.
\begin{itemize}
    \item Out of the algorithms listed in Table \ref{Tab1}, only our approach yields an optimal solution in polynomial time.
    \item The three existing algorithms of exhaustive search (ES), exhaustive search with fixed association (ES-FA), and exhaustive search with quantization (ES-Q) are introduced in \cite{Zhang:2019}. It is proved in \cite[Proposition~3]{Zhang:2019} that an optimal UAV path for Problem 1-1 is composed of finite line segments and the corresponding breakpoints which fall within the overlapping regions of coverage regions of two distinct BSs. Under such methodology, finding an optimal UAV path by utilizing the Dijkstra algorithm is impossible because there are infinite number of points in each overlapping region. The ES algorithm \cite{Zhang:2019} is designed to determine optimal breakpoints within overlapping regions via convex optimization-based algorithms, which is an optimal algorithm with non-polynomial time complexity over $M$. To alleviate the computing time, \cite{Zhang:2019} also presents two sub-optimal algorithms, namely ES-FA and ES-Q algorithms which have polynomial time complexity. The ES-FA algorithm closely resembles the ES algorithm in its core approaches except that it predetermines the BS association sequence as a fixed one. The ES-Q algorithm adopts a graph theory-based approach, converting each overlapping region into a discrete set of points through quantization. The ES-FA algorithm achieves the time complexity lower than the generalized intersection method, yet its performance gap widens with the increase of $M$. Regarding the ES-Q algorithm, we denote $Q\in\mathbb{N}$ as the number of the quantization points within individual overlapping regions. For this algorithm, the performance gap widens with the increase of $M$ when $Q=O(M)$, and its complexity exceeds that of our generalized intersection method when $Q=\omega(M)$.
    \item The intersection method introduced in \cite{Chen:2020} just uses the finite intersection points as potential breakpoints and adopts a graph theory-based approach similar to ours. Note that the algorithm is not optimal due to its reliance on a predetermined BS association sequence, heuristically set as in the ES-FA algorithm \cite{Zhang:2019}. Moreover, it does not clearly propose a function that evaluates whether each line segment connecting two distinct vertices in the graph falls within the total coverage map, like the ChkOut function in Algorithm \ref{Algo3}. Applying the ChkOut function into the intersection method \cite{Chen:2020}, we reveal that the algorithm achieves the same performance gap but has a higher complexity, compared to the ES-FA algorithm \cite{Zhang:2019}.
\end{itemize}

\section{Optimal Trajectory with the Connectivity and Battery Constraints}\label{sec4}
The section aims to plan a UAV path with minimum mission time under connectivity and battery constraints, i.e., addresses Problem 1. We point out that under limited battery capacity, adjusting the flying speed $v$ while traveling can yield advantages because the maximum flight distance in \eqref{eq:3} without battery replacement varies over $v$. For ease of notation, $H_\mathrm{CS}=H$ is assumed. %though our approach readily accommodates to the typical scenario as discussed in Remark~\ref{rmk3}.

To solve Problem 1, we introduce a modified version of our generalized intersection method in Section \ref{sec3}, termed as the generalized intersection method with battery constraint (GIM-B). In addition, we verify that the proposed GIM-B algorithm guarantees an optimal UAV trajectory in polynomial time. This algorithm's pesudo code is delineated in Algorithm \ref{Algo4}.
%%%%%%%%%%%%%%%%% Algorithm 4: GIM-B %%%%%%%%%%%%%%%
\begin{algorithm}
\caption{Generalized Intersection Method with Battery Constraint (GIM-B)} \label{Algo4}
\textbf{Input:} $\mathbf{u}_0$, $\mathbf{u}_F$, $\mathcal{V}$, $\mathbf{a}_m$, $d_0$, $\lambda_m$, $\mathbf{c}_n$, $\tau_{C_n}$, $w$, $w_2$,  for $m\in\mathcal[1:M]$, $n\in\mathcal[1:N]$
\begin{algorithmic}[1]
\State \textbf{Def:} Function \textbf{BFS}$(\mathbf{x}_1,\mathbf{x}_2,G)$ for graph $G=(V,E)$ outputs a binary indicator regarding that the vertices $\mathbf{x}_1,\mathbf{x}_2\in V$ are connected in the graph $G$ $(h_\mathrm{Gfea}=1)$ or not $(h_\mathrm{Gfea}=0)$.
\State $V_\mathrm{GL}\leftarrow\{\mathbf{u}_0,\mathbf{u}_F,\mathbf{c}_1,...,\mathbf{c}_N\}$, $V_\mathrm{LO},E_\mathrm{LO},E_\mathrm{GL}\leftarrow \emptyset$,
\Statex $V_\mathrm{in},E_\mathrm{in},E_1,...,E_{N+2} \leftarrow \emptyset$
\LeftComment{Treat the initial and the final points as CSs.}
\State $\mathbf{c}_{N+1} \leftarrow \mathbf{u}_0$, $\mathbf{c}_{N+2} \leftarrow \mathbf{u}_F$, $\tau_{C_{N+1}},\tau_{C_{N+2}}\leftarrow 0$
\State $V_\mathrm{in}\!\leftarrow\! \text{All intersection points}$ \hfill\Comment{Lines $7$-$11$ at Algorithm \ref{Algo1}}
%\For{$m,m'\in\mathcal{M}$, $m<m'$} \hfill\Comment{Find intersected points}
%    \If{$\|\mathbf{a}_m-\mathbf{a}_{m'}\|\leq 2d_0-\lambda_m-\lambda_{m'}$}
%        \State $V_\mathrm{in}\leftarrow V_\mathrm{in}\cup\{\mathbf{x}\in\mathbb{R}^2 \vert \ \|\mathbf{x}-\mathbf{a}_m\|=d_0-\lambda_m,$
%        \Statex \qquad\quad $|\mathbf{x}-\mathbf{a}_{m'}\|=d_0-\lambda_{m'}\}$
%    \EndIf
%\EndFor 
\State $V_\mathrm{all}\leftarrow V_\mathrm{GL}\cup V_\mathrm{in}$
\LeftComment{Step 1. Outage test: Evaluate whether each conceivable line segment experiences an outage.}
\For{$\mathbf{x}_1,\mathbf{x}_2\in V_\mathrm{all}$, $\mathbf{x}_1\neq \mathbf{x}_2$} 
    \State $h_\mathrm{out}\leftarrow$ \textbf{ChkOut}$(\mathbf{x}_1,\mathbf{x}_2,\mathbf{a}_m, d_0, \lambda_m \!\text{ for }  m\in[1:M])$
    \For {$n\in [1:N+2]$}
        \If{$h_\mathrm{out}=0$, $\mathbf{c}_{n}\in \{\mathbf{x}_1,\mathbf{x}_2\}$ }
            \State $E_n\leftarrow E_n\cup (\mathbf{x}_1,\mathbf{x}_2,\|\mathbf{x}_1-\mathbf{x}_2\|)$
        \EndIf
    \EndFor
    \If{$h_\mathrm{out}\!=\!0$, $\mathbf{c}_{n}\!\not\in\!\{\mathbf{x}_1,\mathbf{x}_2\}$ for $n\in[1:N+2]$}
        \State $E_\mathrm{in}\leftarrow E_\mathrm{in}\cup (\mathbf{x}_1,\mathbf{x}_2,\|\mathbf{x}_1-\mathbf{x}_2\|)$
    \EndIf
\EndFor 
\LeftComment{Step 2. Local level search: Find a shortest path between each pair of CSs.}
\For{$n\in [1:N+1]$, $n'\in[1:N]\cup \{N+2\}$, $n\neq n'$} 
    \LeftComment{Function \textbf{ChkFea} is described in line $1$ at Algorithm \ref{Algo1}.} 
    \State $h_\mathrm{Lfea}\!\leftarrow\!$ \textbf{ChkFea}$(\mathbf{c}_n,\mathbf{c}_{n'}, \mathbf{a}_m, d_0, \lambda_m \!\text{ for } m\in[1:M])$ 
    \If{$h_\mathrm{Lfea}=1$}
        \State $V_\mathrm{LO}\leftarrow V_\mathrm{in}\cup \{\mathbf{c}_n, \mathbf{c}_{n'}\}$, $E_\mathrm{LO}\leftarrow E_\mathrm{in}\cup E_n\cup E_{n'}$
        \State $G_\mathrm{LO}\leftarrow (V_\mathrm{LO}, E_\mathrm{LO})$
        \LeftComment{Function \textbf{Dijkstra} is described in line $2$ at Algorithm \ref{Algo1}.}
        \State $(\ell_\mathrm{LO},\mathbf{S}_{V_\mathrm{LO}}(c_n, c_{n'}))\leftarrow \textbf{Dijkstra}(\mathbf{c}_n,\mathbf{c}_{n'},G_\mathrm{LO})$
        \State ($h_\mathrm{sp},v(c_n, c_{n'})) \leftarrow$ \textbf{ChkSp}$(\ell_\mathrm{LO},\mathcal{V},w,w_2)$
        \If{$h_\mathrm{sp}=1$}
            %\State \hs{$\mathbf{p}_\mathrm{LO}(\mathbf{c}_n,\mathbf{c}_{n'})\leftarrow$\textbf{FdPath}$(\mathbf{S}_{V_\mathrm{LO}},v_\mathrm{max})$}
            \State $E_\mathrm{GL}\leftarrow E_\mathrm{GL}\cup (\mathbf{c}_n,\mathbf{c}_{n'},\ell_\mathrm{LO}/v(c_n, c_{n'})+\tau_{C_{n'}})$
        \EndIf
    \EndIf
\EndFor  
\LeftComment{Step 3. Global level search: Find an optimal path from the initial point to the final point over the graph of CSs.}
\State $\overrightarrow{G}_\mathrm{GL}\leftarrow (V_\mathrm{GL}, E_\mathrm{GL})$ \hfill\Comment{$\overrightarrow{G}_\mathrm{GL}$ is a directed graph.}
\State $h_\mathrm{Gfea}\leftarrow$ \textbf{BFS}$(\mathbf{u}_0,\mathbf{u}_F,\overrightarrow{G}_\mathrm{GL})$
\If{$h_\mathrm{Gfea}=1$}
    \State ($T$, $\mathbf{S}_{V_\mathrm{GL}})\leftarrow$ $\textbf{Dijkstra}(\mathbf{u}_0,\mathbf{u}_F,\overrightarrow{G}_\mathrm{GL})$
    \State ($\mathbf{u}(t)$, $\psi(t)$ for $t\in[0,T]$) $\leftarrow$ \textbf{FindPathG}$(\mathbf{S}_{V_\mathrm{GL}},$
    \Statex\qquad $v(c_n, c_{n'})$, $\mathbf{S}_{V_\mathrm{LO}}(c_n, c_{n'})$, $\tau_{C_{n'}}$ for $n\in[1:N+1],$
    \Statex\qquad $n'\in [1:N]\cup \{N+2\}$)
\Else
    \State $h_\mathrm{Gfea}\leftarrow 0$, $T\leftarrow\infty$, $\mathbf{u}(t),\psi(t) \leftarrow \mathrm{Null}$ for $t\in[0,T]$
\EndIf   
\end{algorithmic}
\textbf{Output:} \big($h_\mathrm{Gfea}$, $T$, $\mathbf{u}(t)$, $\psi(t)$ for $t\in[0,T]$\big)
\end{algorithm}
%%%%%%%%%%%%%%%%%%%%%%%%
The outputted trajectory between the initial and final points yielded by the GIM-B algorithm is composed of finite line segments, where its breakpoints are selected in the CSs and the intersection points of the coverage boundaries. In our algorithm, such line segments (i.e., edges in the equivalent graphs) which make up the UAV path are determined through a three-step process. \sh{In Step 1 (lines $6$-$16$ of Algorithm \ref{Algo4}), it evaluates whether each line segment falls within the total coverage map (i.e., not experiences an outage) and constructs the edge sets without an outage. Next, our algorithm determines an optimal UAV trajectory through two-level process. In Step 2 corresponding to local level path finding (lines $17$-$28$), by regarding the initial and final points also as CSs, it first 
determines a shortest UAV path between each pair of CSs via Dijkstra algorithm on  the graph whose vertex set consists of the corresponding pair of CSs and all intersection points and edge set consists of the line segments between two vertices that survived in Step 1, as illustrated in Fig. \ref{Fig7}. We note that the line segments experiencing an outage are excluded in Step 1 and thus not included in the edge set for finding a shortest path between each pair of CSs in Step 2.} Then, it determines the corresponding maximum permissible speed under the battery limit via the checking speed function ChkSp whose pseudo code is described in Algorithm \ref{Algo5}.
%the local level (in lines $17$-$28$) determines a shortest UAV path between each pair of CSs via Dijkstra algorithm and the corresponding maximum permissible speed under the battery limit via the checking speed function ChkSp whose pseudo code is described in Algorithm \ref{Algo5}.
In the local level, it is assumed that the UAV flies with a fixed speed while traveling between each pair of CSs, which is justified later in Theorem~\ref{Thm3}. 
We point out that traveling between two distinct CSs may be impossible since there may be no feasible path between them  ($h_\mathrm{Lfea}=0$) or the path's length may exceed the maximum flight distance for any possible speed under the battery limit ($h_\mathrm{sp}=0$). %An example of the graph to derive an optimal path between two CSs at the local level is illustrated in Fig. \ref{Fig7}.
\begin{figure}
\centering
\includegraphics[width=0.8\columnwidth]{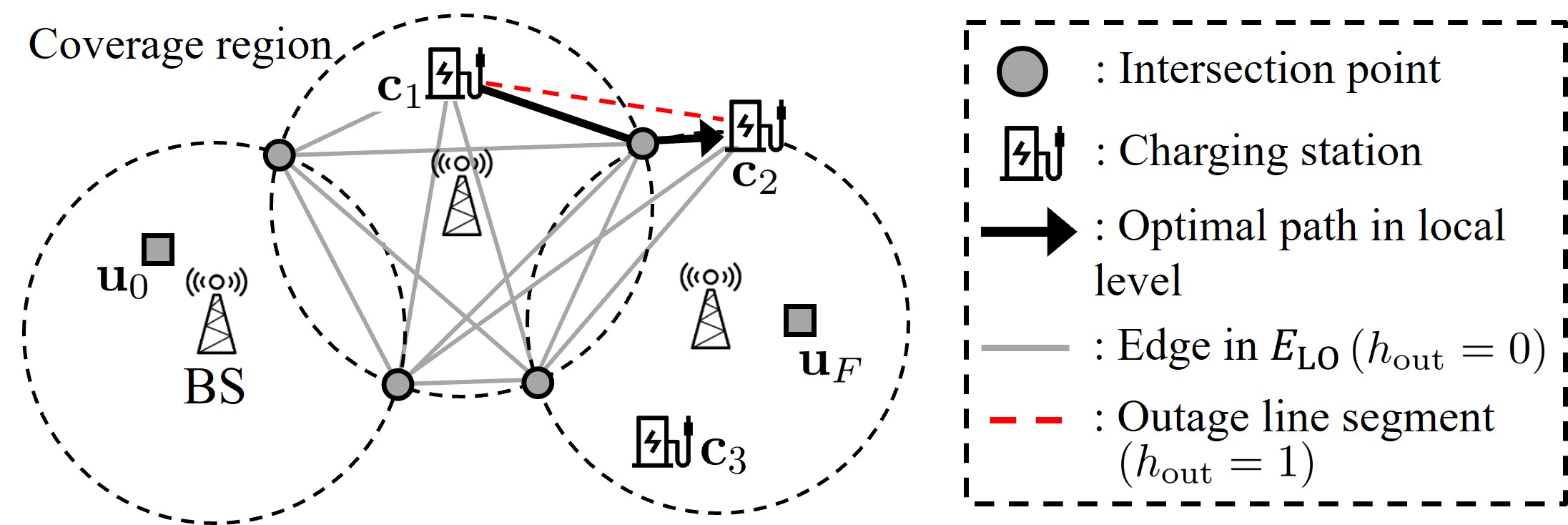}
\caption{ \sh{An example of graph $G_\mathrm{LO}=(V_\mathrm{LO},E_\mathrm{LO})$ for finding the shortest path from $\mathbf{c}_1$ to $\mathbf{c}_2$ in Step 2 of Algorithm \ref{Algo4} with $M=3$ and $N=3$. The vertex set $V_\mathrm{LO}$ consists of $\mathbf{c}_1$, $\mathbf{c}_2$, and all intersection points and the edge set $E_\mathrm{LO}$ includes solid line segments between two vertices which fall within the total coverage map.}}\label{Fig7}
\end{figure}
Finally, in Step 3 corresponding to global level path finding (lines $29$-$36$), our algorithm constructs a directed graph consisting of the CSs as vertices and the pairs of the CSs that have been confirmed to be accessible in the local level as edges, where the weight of each edge is given as the sum of the airborne flight time over the edge and the delay for battery replacement at the arrived CS as specified in line $25$.
%whose weights involve the airborne flight time and the delay for battery replacement.
To construct the graph, it initially 
examines the possibility of traveling between the initial and the final points by the breadth-first search function BFS \cite{West:2001}, that explores every vertex connected with a start vertex in a graph in polynomial time. If possible, the algorithm derives an optimal visiting sequence for CSs via the Dijkstra algorithm over the graph, followed by utilizing the finding path (in global level) function FindPathG that yields an optimal UAV trajectory in response to the visiting sequences for vertices in the local and global levels, the flight speeds between CS pairs, and the delays for battery replacement. An example of the graph to derive an optimal path from the initial point to the final point at the global level is illustrated in Fig. \ref{Fig8}.
\begin{figure}
\centering
\includegraphics[width=0.8\columnwidth]{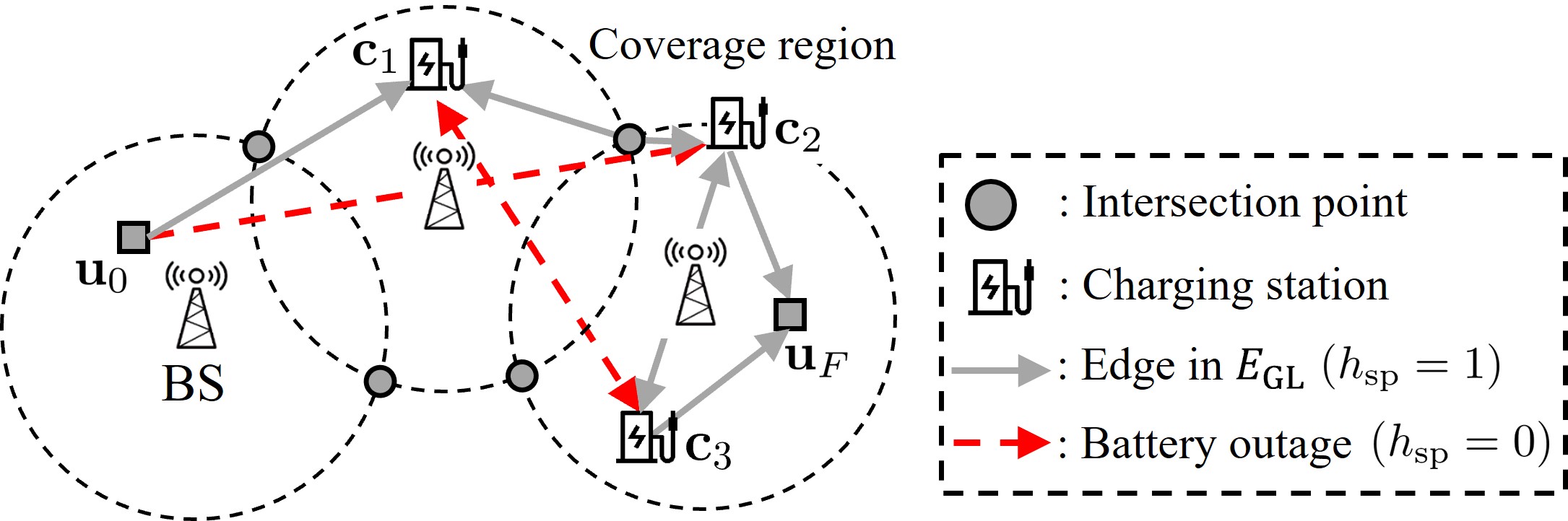}
\caption{An example of graph $\protect\overrightarrow{G}_\mathrm{GL}=(V_\mathrm{GL},E_\mathrm{GL})$ in the global level of Algorithm \ref{Algo4}, where $M=3$ and $N=3$.}\label{Fig8}
\end{figure}

Algorithm \ref{Algo5} describes the function ChkSp which checks whether the UAV can travel a distance $\ell_\mathrm{LO}\geq 0$ without battery replacement $(h_\mathrm{sp}=1)$ or not $(h_\mathrm{sp}=0)$ by using the maximum possible traveling distance function $d_\mathrm{fly}(v)$ in \eqref{eq:3} for speed $v\in\mathcal{V}$. If it is possible $(h_\mathrm{sp}=1)$, then it derives the maximum possible speed $v_\mathrm{max}\in\mathcal{V}$ whose maximum traveling distance $d_\mathrm{fly}(v_\mathrm{max})$ is not smaller than $\ell_\mathrm{LO}$. We note that the algorithm assumes that the UAV maintains a fixed speed while traveling between two CSs, but the speed can vary depending on the pair of CSs. The following theorem shows a sufficient condition for traveling between two CSs with a fixed speed to be optimal.

%%%%%%% Algorithm 5: Checking Speed %%%%%%%%%
\begin{algorithm}[t]
\caption{Function ChkSp} \label{Algo5}
\textbf{Input:} $\ell_\mathrm{LO}$, $\mathcal{V}$, $w$, $w_2$
\begin{algorithmic}[1]
\If{$\{v\in\mathcal{V}|d_\mathrm{fly}(v)\geq \ell_\mathrm{LO}\}\neq \emptyset$}
    \State $h_\mathrm{sp}\leftarrow 1$ \hfill\Comment{Can travel $\ell_\mathrm{LO}$ without battery replacement} 
    
    \LeftComment{Find the maximum possible speed $v_\mathrm{max}$ that can travel the length $\ell_\mathrm{LO}$ without battery replacement.}
    \State $v_\mathrm{max}\leftarrow \max_{v\in\mathcal{V}}\{v|d_\mathrm{fly}(v)\geq \ell_\mathrm{LO}\}$
\Else
    \State $h_\mathrm{sp}\leftarrow 0$, $v_\mathrm{max}=0$
\EndIf
\end{algorithmic}
\textbf{Output:} $(h_\mathrm{sp}$, $v_\mathrm{max})$
\end{algorithm}
%%%%%%%%%%%%%%%%%%%%%%%%

\begin{theorem}\label{Thm3} % fixed UAV speed
Assume that the UAV can fly with any speed $v\in[v_1,v_q]$ and the power consumption model $P_\mathrm{UAV}(v)$ is convex for $v\in[v_1,v_q]$. Then, for traveling between two CSs with the connectivity and battery constraints, flying with a fixed speed minimizes the traveling time.
%for any path length  \ell_\mathrm{LO}\geq 0$ between the two CSs.}
\end{theorem}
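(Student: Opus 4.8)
The plan is to reduce the statement to a one-dimensional variational problem over the speed profile along a fixed path, and then exploit convexity of $P_\mathrm{UAV}$ through Jensen's inequality. First I would observe that the connectivity constraint \eqref{eq:11} restricts only the spatial trajectory $\mathbf{u}(t)$ and is independent of how fast the UAV traverses it; hence for a fixed feasible path between the two CSs (say of length $\ell$) the travel time and the energy usage depend solely on the speed profile $v(t)$. Because each leg between consecutive CSs is flown on a freshly swapped, fully charged battery, the energy available for the leg is $E_\mathrm{avail}\triangleq\gamma C_\mathrm{batt}/r_\mathrm{safe}$, and by \eqref{eq:15} the energy drawn up to any time is $\frac{1}{\eta}\int P_\mathrm{UAV}(v(t))\,dt$, which is monotonically increasing. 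The pointwise battery constraint \eqref{eq:14} therefore reduces to the single terminal bound $\frac{1}{\eta}\int_0^T P_\mathrm{UAV}(v(t))\,dt\le E_\mathrm{avail}$, so the leg-level problem becomes: minimize $T$ over profiles $v:[0,T]\to[v_1,v_q]$ subject to the distance constraint $\int_0^T v(t)\,dt=\ell$ and this energy bound.

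Next I would dispose of hovering ($v=0$): since a rotary-wing UAV still draws positive power $P_\mathrm{UAV}(0)>0$ while covering no distance, any interval of zero speed wastes both time and energy and can be deleted without violating either constraint, so it suffices to consider $v(t)\in[v_1,v_q]$. The core is then a smoothing argument. Take any feasible profile with travel time $T$ and set the time-averaged speed $\bar v\triangleq\frac{1}{T}\int_0^T v(t)\,dt=\ell/T$. Being a convex combination of values in $[v_1,v_q]$, $\bar v$ lies in $[v_1,v_q]$, so the constant-speed profile $v(t)\equiv\bar v$ is admissible and covers the same distance $\ell$ in the same time $T$. Applying Jensen's inequality to the convex function $P_\mathrm{UAV}$ under the normalized measure $dt/T$ gives $P_\mathrm{UAV}(\bar v)\le\frac{1}{T}\int_0^T P_\mathrm{UAV}(v(t))\,dt$, whence $\frac{T}{\eta}P_\mathrm{UAV}(\bar v)\le\frac{1}{\eta}\int_0^T P_\mathrm{UAV}(v(t))\,dt\le E_\mathrm{avail}$. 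Thus the constant-speed profile is also feasible and achieves the same travel time.

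This shows that every variable-speed solution is matched by a constant-speed solution with no larger travel time; since constant-speed profiles are a special case, the infimum over all profiles equals the infimum over constant-speed profiles and is attained by one, which is the claim. I expect the main subtlety to be feasibility bookkeeping rather than the inequality itself: one must confirm that replacing $v(t)$ by $\bar v$ keeps the speed within the admissible band $[v_1,v_q]$ (handled by the averaging remark) and that the all-time constraint \eqref{eq:14} genuinely collapses to the terminal budget by monotone depletion. Most importantly, convexity of $P_\mathrm{UAV}$ is precisely what forces Jensen's inequality in the energy-reducing direction for the constant-speed profile; were $P_\mathrm{UAV}$ concave on $[v_1,v_q]$, varying the speed could strictly lower the energy at a fixed average speed and thereby permit a higher feasible average speed, so the conclusion would fail. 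This is why the convexity hypothesis is indispensable, and the argument is applied leg-by-leg, each leg starting from a full battery so that its per-leg budget is indeed $E_\mathrm{avail}$.
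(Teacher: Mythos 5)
Your proposal is correct and follows essentially the same route as the paper's proof: both replace the variable-speed profile by its time-averaged speed $\bar v=\ell/T$, note that this covers the same distance in the same time, and apply Jensen's inequality to the convex function $P_\mathrm{UAV}$ with time-fraction weights to show the constant-speed profile consumes no more energy. Your continuous-time formulation and the extra bookkeeping (eliminating hovering, collapsing the pointwise battery constraint to a terminal budget, checking $\bar v\in[v_1,v_q]$) are careful additions, but the core argument is identical to the paper's piecewise-constant version.
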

\begin{proof}
Let us assume that the path distance $\ell_\mathrm{LO}$ to travel between two CSs is partitioned by segments $\ell_1,...,\ell_K$ where $\ell_\mathrm{LO}=\sum_{k=1}^K\ell_k$ and the UAV flies with speed $\tilde{v}_k\in[v_1,v_q]$ for segment $\ell_k$ for $k\in[1:K]$. In this case, we have  the total travel time $T_\mathrm{LO}=\sum_{k=1}^K \ell_k/\tilde{v}_k$ and the total consumed energy $E_\mathrm{LO}=\sum_{k=1}^K (\ell_k/\tilde{v}_k)\cdot P_\mathrm{UAV}(\tilde{v}_k)$. We prove this theorem by showing the UAV can travel $\ell_\mathrm{LO}$ within time $T_\mathrm{LO}$ by a fixed speed $\bar{v}\in[v_1,v_q]$ while consuming energy equal to or less than $E_\mathrm{LO}$. First, the UAV can travel $\ell_\mathrm{LO}$ in time $T_\mathrm{LO}$ if it travels with the fixed speed $\bar{v}={\ell_\mathrm{LO}\over{\sum_{k'=1}^K \ell_{k'}/\tilde{v}_{k'}}}$. Second, $E_\mathrm{LO}$ is lower-bounded as:
\begin{align}
E_\mathrm{LO}&=\sum_{k=1}^K (\ell_k/\tilde{v}_k)\cdot P_\mathrm{UAV}(\tilde{v}_k) \label{eq:30}\\
\overset{(a)}\geq& \Biggl(\sum_{k'=1}^K \ell_{k'}/\tilde{v}_{k'}\!\Biggr)\!\cdot P_\mathrm{UAV}\left(\sum_{k=1}^K{{\ell_k/\tilde{v}_k}\over{\sum_{k'=1}^K \ell_{k'}/\tilde{v}_{k'}}}\cdot \tilde{v}_k\!\!\right) \label{eq:31}\\
=& T_\mathrm{LO} \cdot P_\mathrm{UAV}(\bar{v}), \label{eq:32}
\end{align}
where $(a)$ is by Jensen's inequality. Since the UAV with fixed speed $\bar{v}$ consumes less energy than $E_\mathrm{LO}$ as $\eqref{eq:32}$, this proves the theorem.  
\end{proof}
We note that the power consumption model in \eqref{eq:1} can be approximated as a convex function when $v\gg v_0(w)$ as proved in \cite{Zeng:2019}. %This convexity will be also shown numerically in Section \ref{sec6}. 
%\footnote{Such convexity of the power consumption model $P_\mathrm{UAV}(v)$ will be numerically shown in Section \ref{sec6}.}
Hence, in Algorithm \ref{Algo1}, traveling between two CSs with a fixed speed, while the speed can vary depending on the pair of CSs, is close to optimal.

Now, the following theorems show that our GIM-B algorithm outputs an optimal solution of Problem 1 in polynomial time under the assumption that the power consumption model $P_\mathrm{UAV}(v)$ is convex in the range of the UAV  speed, where we assume that $|\mathcal{V}|=O(M)$ to make the complexity of selecting $v_\mathrm{max}$ in Algorithm \ref{Algo5} negligible.
%\footnote{We assume $|\mathcal{V}|=O(M)$ to make the complexity of selecting $v_\mathrm{max}$ in Algorithm \ref{Algo5} negligible.}

%%%%% Table 2 %%%%%%%%%%%%%%%%
\begin{table*}
\caption{Comparison of algorithms for Problem 1}\label{Tab2}
\centering
\begin{tabular}{@{} c || c | c @{}}
\cline{1-3}
Algorithm ($^*$modified considering the battery constraint) & Complexity & Performance gap \\ \cline{1-3}
Exhaustive search$^*$ \cite{Zhang:2019} & $O(M!M^{3.5}N^2)$ & 0\\ \cline{1-3}
Exhaustive search with fixed association$^*$ \cite{Zhang:2019} & $O(M^{3.5}N^2)$ & $O(MNd_0/v_q+N\tau_\mathrm{max})$ \\ \cline{1-3}
Exhaustive search with quantization$^*$ \cite{Zhang:2019} & {$O(M^4Q^2N^2)$} & $O(MNd_0/v_q+N\tau_\mathrm{max})$ \\ \cline{1-3}
%Exhaustive search with quantization$^*$ \cite{Zhang:2019} in same association &  & $O((MNd_0/v_q)\sin(1/{Q}))$ \\ \cline{1-3}
Intersection method$^*$ \cite{Chen:2020} by checking outages via Algorithm \ref{Algo3} & $O(M^4N^2)$ & $O(MNd_0/v_q+N\tau_\mathrm{max})$ \\ \cline{1-3}
Ours (Generalized intersection method with battery constraint)  & $O(M^6)$ & $0$ \\ \cline{1-3}
%LCI-B method without performing Algorithm \ref{Algo3} in advance & $O(M^6N^2)$ & $0$ \\ \cline{1-3}
\end{tabular}
\end{table*}
%%%%%%%%%%%%%%%%%%%%%%%%%%%%

\begin{theorem}\label{Thm4} %GIM-B optimality
The GIM-B algorithm outputs an optimal solution for Problem 1 if the power consumption model $P_\mathrm{UAV}(v)$ is convex in the range of the UAV speed.
\end{theorem}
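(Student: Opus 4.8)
The plan is to reduce Problem 1 to the two-level shortest-path search performed by GIM-B, by first arguing that any optimal trajectory decomposes into independent legs between battery swaps, and then invoking Theorems \ref{Thm1} and \ref{Thm3} on each leg. First I would fix an optimal solution and observe that, by the battery dynamics \eqref{eq:15}--\eqref{eq:17}, the remaining energy $E_\mathrm{batt}(t)$ is restored to $C_\mathrm{batt}$ only while the UAV resides at a CS with $\psi(t)\in[1:N]$; hence every recharge to full occurs at a CS. Treating $\mathbf{u}_0$ and $\mathbf{u}_F$ as CSs with zero swap delay (as the algorithm does via $\mathbf{c}_{N+1},\mathbf{c}_{N+2}$), the trajectory becomes a concatenation of legs, each beginning at a CS with a fully charged battery (by \eqref{eq:9} for the first leg and by \eqref{eq:17} thereafter) and ending at the next CS where a swap occurs, with no intermediate recharge.

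Next I would establish that, once the visiting order of the CSs is fixed, the legs are mutually independent. The key point is that a swap replaces the depleted battery by a fully charged one, so the energy state at the start of every leg equals $C_\mathrm{batt}$ irrespective of the preceding legs; consequently the feasibility and the minimal traversal time of a leg depend only on its two endpoint CSs. The total mission time therefore splits additively into the sum over legs of the leg flight time and the swap delay $\tau_{C_{n'}}$ incurred upon arrival, which is precisely the edge weight $\ell_\mathrm{LO}/v(c_n,c_{n'})+\tau_{C_{n'}}$ assigned in Algorithm \ref{Algo4}. For a single leg from $\mathbf{c}_n$ to $\mathbf{c}_{n'}$ I would then apply Theorem \ref{Thm1}: since the leg uses no recharge, minimizing its flight time at a fixed speed is equivalent to minimizing its Euclidean length subject to the connectivity constraint, and the optimal such path has breakpoints only at intersection points of coverage boundaries. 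Thus the local-level graph $G_\mathrm{LO}$, whose vertices are the two endpoint CSs together with all intersection points, contains the optimal leg path, and the Dijkstra call returns its length $\ell_\mathrm{LO}$. Invoking Theorem \ref{Thm3} under the stated convexity of $P_\mathrm{UAV}$, a fixed speed is optimal for the leg; and since the traversal time $\ell_\mathrm{LO}/v$ is decreasing in $v$ while the battery constraint caps the feasible speed through $d_\mathrm{fly}(v)\geq \ell_\mathrm{LO}$, the maximal feasible speed $v_\mathrm{max}$ selected by ChkSp minimizes the leg time.

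With the legs reduced to edges of known minimal weight, I would finish at the global level. The directed graph $\overrightarrow{G}_\mathrm{GL}$ places an edge $\mathbf{c}_n\to\mathbf{c}_{n'}$ exactly when the leg is both connectivity-feasible ($h_\mathrm{Lfea}=1$) and fits within a full battery for some admissible speed ($h_\mathrm{sp}=1$), with weight equal to the minimal leg flight time plus $\tau_{C_{n'}}$. By the additive decomposition above, every feasible trajectory of Problem 1 corresponds to a directed $\mathbf{u}_0$-to-$\mathbf{u}_F$ walk in $\overrightarrow{G}_\mathrm{GL}$ whose total weight equals its mission time, and conversely every such walk is realizable; minimizing the mission time is therefore a shortest-path problem on $\overrightarrow{G}_\mathrm{GL}$, which Dijkstra solves optimally. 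The preceding BFS merely certifies reachability, so that $T=\infty$ is reported only when Problem 1 is genuinely infeasible.

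The main obstacle I anticipate is the separability argument: making rigorous that optimizing each leg's path and speed in isolation loses nothing globally. This hinges on two facts that must be stated carefully, namely that a swap resets the battery to full so no energy is carried across legs, and that Theorem \ref{Thm1} applies verbatim to arbitrary endpoint pairs rather than only to $\mathbf{u}_0,\mathbf{u}_F$, so that excluding the other CSs from the vertex set of $G_\mathrm{LO}$ is harmless because an optimal leg never needs a breakpoint at a non-endpoint CS. A secondary point to verify is that the UAV may traverse a CS without swapping; this is already subsumed because $\overrightarrow{G}_\mathrm{GL}$ contains a direct edge between any two reachable CSs, while any genuinely beneficial intermediate swap appears as two consecutive edges that Dijkstra is free to select.
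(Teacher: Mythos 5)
Your proposal is correct and follows essentially the same route as the paper: the paper's proof likewise decomposes the trajectory into legs between CSs, invokes Theorem \ref{Thm1} for minimum-distance local paths, Theorem \ref{Thm3} for fixed-speed optimality on each leg, and the optimality of Dijkstra on the global CS graph. Your version simply spells out the separability and battery-reset arguments that the paper leaves implicit.
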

\begin{proof}
It is immediate from Theorems \ref{Thm1} and \ref{Thm3} and the optimality of the Dijkstra algorithm. More specifically, 
\begin{enumerate}
    \item Theorem \ref{Thm1} means that every path between two CSs at the local level has the minimum travel distance.
    \item Theorem \ref{Thm3} implies that flying with the same speed in each path at the local level is optimal. Hence, the GIM-B algorithm derives the minimum travel time for the paths.
    \item Under the graph $\overrightarrow{G}_\mathrm{GL}$ with the minimized edge weights, an optimal trajectory from $\mathbf{u}_0$ to $\mathbf{u}_F$ at the global level is derived by applying the Dijkstra algorithm.
\end{enumerate}
\end{proof}

\begin{theorem}\label{Thm5} % GIM-B complexity
If the number of CSs is smaller than or equal to the number of BSs, i.e., $N\leq M$, then the time complexity of the GIM-B algorithm is $O(M^6)$. 
\end{theorem}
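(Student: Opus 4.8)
The plan is to bound the running time of each of the three steps of Algorithm~\ref{Algo4} separately and show that the total is dominated by the first two steps, each costing $O(M^6)$; the analysis parallels that of Theorem~\ref{Thm2}, with the extra factors due to the charging stations being controlled through the hypothesis $N\leq M$. First I would record the relevant cardinalities. The intersection-point set $V_\mathrm{in}$ satisfies $|V_\mathrm{in}|=O(M^2)$ exactly as in the proof of Theorem~\ref{Thm2}, while the global vertex set has $|V_\mathrm{GL}|=N+2$. Since $N\leq M$, it follows that $|V_\mathrm{GL}|=O(M)$ and $|V_\mathrm{all}|=|V_\mathrm{GL}\cup V_\mathrm{in}|=O(M^2)$.

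Next I would analyze Step 1 (the outage test in lines $6$--$16$). The outer loop runs over all pairs in $V_\mathrm{all}$, of which there are $|V_\mathrm{all}|^2=O(M^4)$. For each pair, the call to \textbf{ChkOut} costs $O(M^2)$ by the same argument as in Theorem~\ref{Thm2}, while the inner loop over $n\in[1:N+2]$ adds only $O(N)=O(M)$, which is absorbed. Hence Step 1 costs $O(M^4)\cdot O(M^2)=O(M^6)$.

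For Step 2 (the local level search in lines $17$--$28$), the key observation is that $V_\mathrm{in}$ and $E_\mathrm{in}$ are precomputed once in Step 1 and merely reused for each pair of charging stations, so no outage test is recomputed here. The outer loop runs over $O(N^2)$ pairs of charging stations; for each pair, assembling $G_\mathrm{LO}$ and running \textbf{ChkFea} costs $O(M^2)$, the \textbf{Dijkstra} call on $G_\mathrm{LO}=(V_\mathrm{LO},E_\mathrm{LO})$ with $|V_\mathrm{LO}|=O(M^2)$ costs $O(|V_\mathrm{LO}|^2)=O(M^4)$, and \textbf{ChkSp} costs $O(|\mathcal{V}|)=O(M)$, so the Dijkstra call dominates at $O(M^4)$ per pair. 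Using $N\leq M$, the total cost of Step 2 is $O(N^2)\cdot O(M^4)=O(M^6)$. Step 3 (the global level search in lines $29$--$36$) then operates on $\overrightarrow{G}_\mathrm{GL}$ with only $|V_\mathrm{GL}|=O(M)$ vertices and $|E_\mathrm{GL}|=O(M^2)$ edges, so \textbf{BFS} and \textbf{Dijkstra} each cost $O(M^2)$ and the reconstruction by \textbf{FindPathG} is polynomial of lower order; combining all three steps yields the overall $O(M^6)$ bound.

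The main obstacle I anticipate is the bound on Step 2: a priori the $O(N^2)$ repeated Dijkstra calls could dominate, and it is precisely the hypothesis $N\leq M$ together with the reuse of the precomputed intersection structure (so that outage tests are not repeated per pair) that keeps this term at $O(M^6)$ rather than blowing up to an independent $O(N^2 M^4)$.
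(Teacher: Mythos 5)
Your proof is correct and follows essentially the same decomposition as the paper's: the same cardinality bounds ($|V_\mathrm{all}|=O(M^2)$, $|V_\mathrm{GL}|=O(N)$), the same $O(M^6)$ bound for the outage test in Step 1, the same observation that reusing the precomputed edge sets lets each local-level Dijkstra cost $O(M^4)$ so that Step 2 is $O(N^2 M^4)=O(M^6)$ under $N\leq M$, and the same treatment of Step 3 as lower order. The only cosmetic differences (e.g., bounding the per-pair bookkeeping as $O(M^2)$ rather than $O(N)$) are absorbed by the dominant terms and do not affect the argument.
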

\begin{proof}
Let us first state the cardinalities of the following sets: $|V_\mathrm{all}|=O(M^2)$, $|V_\mathrm{LO}|=O(M^2)$, and $|V_\mathrm{GL}|=O(N)$. The steps of Algorithm \ref{Algo4} have the following complexities:
\begin{itemize}
    \item Step 1. Outage test: For a line segment, performing the function ChkOut and selecting a memory to save the line segment among $E_\mathrm{in},E_1,...,E_{N+2}$ have the complexities $O(M^2)$ and $O(N)$, respectively. Since each line segment $\overline{\mathbf{x}_1\mathbf{x}_2}$ for $\mathbf{x}_1,\mathbf{x}_2\in V_\mathrm{all}$ and $\mathbf{x}_1\neq\mathbf{x}_2$ should be checked whether experiencing an outage, the complexity of this step is $(O(M^2)+O(N))\cdot |V_\mathrm{all}|^2=O(M^6)$.
    \item  Step 2. Local level search: The complexity of deriving an optimal path between a pair of CSs at the local level can be proved similarly with the proof of Theorem \ref{Thm2}. However, this algorithm constructs the edge set $E_\mathrm{LO}$ with only complexity $O(N)$ by just loading some of the saved memories $E_\mathrm{in},E_1,...,E_{N+2}$. Hence, the complexity of deriving an optimal path in the local level is $O(N)$+$O(M^4)=O(M^4)$. Since there are $O(N^2)$ pairs of the CSs, the complexity of the step is $O(M^4N^2)$.
    \item Step 3. Global level search: This complexity is dominated by applying the Dijkstra algorithm at the graph $\overrightarrow{G}_\mathrm{GL}$ with the complexity $O(|V_\mathrm{GL}|^2)=O(N^2)$ \cite{West:2001}. 
\end{itemize}
Consequently, the complexity of the GIM-B algorithm is $O(M^6)$ for $N\leq M$.
\end{proof}
We note that $N\leq M$ in general because CSs are more expensive and sparse than BSs. Even though incorporating the battery limit, our algorithm maintains the same complexity order as the generalized intersection method if $N\leq M$. 
\begin{comment}
Note that the outage of every possible line segment is tested in advance in Step 1 of GIM-B algorithm. However, a direct extension from the generalized intersection method would be treating the pair of CSs as the initial and final points and applying a modified version of Algorithm~\ref{Algo1}, which implies performing the outage test in Step~2. The following corollary shows that such a direct extension of the generalized intersection method has a higher order of complexity.

\begin{corollary}\label{Cor1} % GIM-B complexity
If the outage test is separately performed in the derivation of  an optimal path between each pair of CSs, the time complexity increases to  $O(M^6N^2)$.
\end{corollary}
\begin{proof}
This method checks whether each line segment experiences an outage at the Step 2 in Algorithm \ref{Algo4}. In this case, the complexity for deriving a path between two CSs at the local level through the Step 2 is the same as the complexity $O(M^6)$ of the generalized intersection method. Hence, the complexity of this method is dominated at deriving $O(N^2)$ paths for every CS pair: $O(M^6)\cdot O(N^2)=O(M^6N^2)$.
\end{proof}
\end{comment}

Table \ref{Tab2} presents a comparison of our algorithm with existing algorithms in \cite{Zhang:2019,Chen:2020} addressing Problem 1. We point out that the existing algorithms in Table \ref{Tab2} retain the same name as Table \ref{Tab1} despite slight modifications to accommodate the battery limit. In more detail, we adapt the existing algorithms following a similar process to Algorithm \ref{Algo4}: Step 1 is omitted because this step is impossible to apply at every algorithm in \cite{Zhang:2019} and does not benefit for the intersection method \cite{Chen:2020}. Step 2 implements the algorithms with minor adjustments through regarding each CS pair as the initial and final points and testing whether the UAV can travel the outputted path in the local level within the battery limit. Step 3 uses the Dijkstra algorithm to find the trajectory in the global level. To perform meaningful comparisons with the results in Table \ref{Tab1}, it is assumed that the UAV's flying speed is fixed at $v_q$ (i.e., $\mathcal{V}=\{0,v_q\}$) to effectively analyze the performance gap in Table \ref{Tab2}, and there always exists a feasible trajectory between the initial and final points in Step 3 for every algorithm (i.e.,  $h_\mathrm{Gfea}=1$). The key insights for Table \ref{Tab2} are outlined as follows:
\begin{itemize}
    \item Our approach yields an optimal solution for Problem 1 in polynomial time. 
    \item For the sub-optimal algorithms, their performance gaps increase with the increase of $N$ by the cumulative effect of the gaps encountered in the path planning between each CS pair. These gaps also vary according to the maximum delay $\tau_\mathrm{max}$ for battery replacement since the UAV may visits more CSs in the paths via sub-optimal algorithms.
    \item In contrast to Table \ref{Tab1}, for ES-Q algorithm, its performance gap does not diminish in the number $Q$ of quantizations, since applying the ES-Q algorithm in Step 2 may lead to the disappearance of few edges in the graph $\overrightarrow{G}_\mathrm{GL}$ at the global level of our GIM-B algorithm by the battery limit.
\end{itemize}
The aforementioned analysis implies that our intersection point-based algorithms have more advantages compared to the benchmark algorithms in the presence of the battery constraint and the CSs.

% Remark 3: H_CS < H
\begin{remark}\label{rmk3}
When $H_\mathrm{CS}<H$, we can solve Problem 1 by including the take-off and the landing times at charging station $C_n$ in overall delay $\tau_{C_n}$ for $n\in[1:N]$ and considering the consumed energy for them in the battery capacity model \eqref{eq:2}. Similarly, we can check that our GIM-B algorithm is applicable in the case that the altitude of the initial or the final point is lower than $H$ with slight modification.
\end{remark}

\begin{comment}

% Remark 4: Benchmarking in general scenario
\begin{remark}\label{rmk4}
If we relax the assumptions in deriving the performance gaps in Table \ref{Tab2}, the performance gaps of the sub-optimal algorithms even increase. First, if the UAV can fly with a dynamic speed (i.e., $|\mathcal{V}|\geq 3$ where $0\in \mathcal{V}$), the maximum possible speed $v_\mathrm{max}$ to travel between two CSs at the local level decreases as its distance $\ell_\mathrm{LO}$ increases. Hence, it should be considered in the performance gaps that when the distance $\ell_\mathrm{LO}$ increases by using a sub-optimal algorithm, its travel time additionally increases due to decreasing $v_\mathrm{max}$. Second, a feasible path from the initial point to the final point may not exist in Step 3 for some sub-optimal algorithms (i.e., $h_\mathrm{Gfea}=0$) since some of the edges in the graph $\overrightarrow{G}_\mathrm{GL}$ over the CSs of the GIM-B algorithm may disappear if we apply the sub-optimal algorithms in Step 2 due to the battery constraint.
\end{remark}
\end{comment}

\section{Other Objectives}\label{sec5}
In this section, we introduce two UAV path planning problems to move from the initial point $\mathbf{u}_0$ to the final point $\mathbf{u}_F$ under the connectivity and battery constraints, similar to Problem 1 but with different objectives. We consider the objectives of minimizing the energy consumption of the UAV and of maximizing the payload weight that can be delivered in Sections \ref{sec5A} and \ref{sec5B}, respectively.
\subsection{Minimum Energy Consumption}\label{sec5A}
This subsection focuses on finding the UAV trajectory to minimize the energy consumption of the UAV, considering the energy efficiency as in \cite{Zeng:2019,Qi:2020}. We can formulate the optimization problem as follows:
\begin{align} 
&\textbf{Problem 2} \cr
&\mbox{Objective:~}~ \min_{T\geq 0, \{\mathbf{u}(t),\ \psi(t),\ t\in[0,T]\}} \int_{t=0}^{T} {P_\mathrm{UAV}(v(t)) \over \eta }\mathrm{d}t\label{eq:eff1}\\
&\mbox{Constraints: }\cr
&\text{\eqref{eq:9}-\eqref{eq:17}}, \label{eq:eff2}
\end{align}where the objective \eqref{eq:eff1} is the total propulsion energy consumption of the UAV while traveling from $\mathbf{u}_0$ to $\mathbf{u}_F$.

To solve Problem 2, we introduce a slightly modified version of our GIM-B algorithm, called energy-efficient generalized intersection method with battery constraint (EGIM-B). This algorithm is almost the same as the original GIM-B algorithm, except that the weight of each edge at the global level is now given as the minimum energy consumption to travel between the corresponding two CSs and the flight speed during the travel over the edge is selected to minimize the energy consumption. The following proposition shows that the EGIM-B algorithm yields an optimal solution for Problem 2.
%\footnote{It can be shown that if $N\leq M$, the time complexity of the EGIM-B algorithm is $O(M^6)$, under the assumption that the complexity of finding the flight speed $v_\mathrm{eff}$ is negligible.}
\begin{proposition}\label{pro1}
The EGIM-B algorithm outputs an optimal solution for Problem 2.
\end{proposition}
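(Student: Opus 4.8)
The plan is to reuse the three-layer structure of the proof of Theorem~\ref{Thm4}, replacing mission time by propulsion energy throughout, and to exploit a structural identity that makes the energy version actually simpler than the time version. I would establish, in order: (i) that on a single leg between two CSs the energy-minimizing speed profile is a constant speed equal to the energy-efficient speed; (ii) that minimizing energy on a leg is equivalent to minimizing its length, so Theorem~\ref{Thm1} still forces the local optimum to have intersection-point breakpoints; and (iii) that energy is additive across legs and a battery swap costs no propulsion energy, so the global Dijkstra step returns the overall optimum.

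The heart of the argument is the single-leg analysis. Decomposing a leg of length $\ell$ into pieces $\ell_1,\dots,\ell_K$ flown at speeds $\tilde v_k\in\mathcal V$, the consumed energy is $\sum_k(\ell_k/\tilde v_k)P_\mathrm{UAV}(\tilde v_k)/\eta=\sum_k \ell_k\big(P_\mathrm{UAV}(\tilde v_k)/(\tilde v_k\eta)\big)\ge \ell\,\min_{v\in\mathcal V,\,v>0}P_\mathrm{UAV}(v)/(v\eta)$, with equality at the constant speed $v_\mathrm{ee}\triangleq\argmin_{v\in\mathcal V,\,v>0}P_\mathrm{UAV}(v)/(v\eta)$. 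In contrast to Theorem~\ref{Thm3}, this needs no convexity of $P_\mathrm{UAV}$, since the energy-per-distance $P_\mathrm{UAV}(v)/(v\eta)$ is minimized pointwise. I would then read off from \eqref{eq:3} that $d_\mathrm{fly}(v)=(\gamma C_\mathrm{batt}/r_\mathrm{safe})/(P_\mathrm{UAV}(v)/(v\eta))$, i.e.\ the reachable distance is inversely proportional to the energy-per-distance, so $v_\mathrm{ee}$ is simultaneously the speed that maximizes $d_\mathrm{fly}$. This identity is the crux: it guarantees that whenever a leg can be flown without a swap at some speed, it can be flown at $v_\mathrm{ee}$, so choosing the energy-minimizing speed never sacrifices battery feasibility, and the minimum leg energy is exactly $\ell\cdot P_\mathrm{UAV}(v_\mathrm{ee})/(v_\mathrm{ee}\eta)$, an increasing function of $\ell$.

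Since the minimum leg energy is proportional to $\ell$, minimizing it is equivalent to minimizing the leg length, so Theorem~\ref{Thm1} applies unchanged and the energy-optimal local path coincides with the shortest connectivity-feasible path (with intersection-point breakpoints) that EGIM-B's unmodified local-level Dijkstra already computes; the energy it records for that edge, $\ell\cdot P_\mathrm{UAV}(v_\mathrm{ee})/(v_\mathrm{ee}\eta)$, is then the true minimum for that CS pair. Because a battery swap consumes no propulsion energy by \eqref{eq:16}, the total mission energy splits additively into the leg energies along the visited CS sequence, so running Dijkstra on the global directed graph with these edge weights yields the minimum-energy feasible sequence and hence an optimal solution of Problem~2. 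The only delicate point is the feasibility interplay between steps (i) and (ii): one must rule out that some non-energy-efficient speed is needed to respect the battery limit on a given leg, and this is settled precisely by the inverse proportionality between $d_\mathrm{fly}$ and energy-per-distance; once it is in hand, the remaining steps are direct transcriptions of the proof of Theorem~\ref{Thm4}.
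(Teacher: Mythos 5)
Your proposal is correct and follows the same two-level skeleton as the paper's proof (local legs reduced to shortest connectivity-feasible paths via Theorem~\ref{Thm1}, then global Dijkstra over the CS graph with additive edge weights), but your single-leg analysis is a genuine strengthening of what the paper writes down. The paper's proof of Proposition~\ref{pro1} is two lines: it asserts that minimizing leg energy amounts to minimizing leg distance and defers the speed question to the remark after the proof, where $v_\mathrm{eff}=\argmin_{v\in\mathcal{V}\setminus 0}P_\mathrm{UAV}(v)/(\eta v)$ is simply stated. You instead prove the constant-speed claim from the identity $\sum_k(\ell_k/\tilde v_k)P_\mathrm{UAV}(\tilde v_k)/\eta=\sum_k\ell_k\,P_\mathrm{UAV}(\tilde v_k)/(\tilde v_k\eta)\ge\ell\min_{v}P_\mathrm{UAV}(v)/(v\eta)$, which requires no convexity of $P_\mathrm{UAV}$ (unlike Theorem~\ref{Thm3}, which the time-minimization result genuinely needs), and you explicitly close the feasibility loophole via the observation that \eqref{eq:3} makes $d_\mathrm{fly}(v)$ inversely proportional to the energy-per-distance, so the energy-optimal speed is simultaneously the range-maximizing one and battery feasibility is never sacrificed. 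The paper leaves both of these points implicit; your version buys a self-contained, assumption-free proof of the local step, while the paper's buys brevity by leaning on Theorem~\ref{Thm1}, the $v_\mathrm{eff}$ remark, and Dijkstra optimality. No gaps.
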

\begin{proof}
This proof is immediate from Theorem \ref{Thm1} and the optimality of the Dijkstra algorithm because
\begin{enumerate}
    \item To minimize the propulsion energy consumption for traveling between two distinct CSs at the local level, the travel distance between those CSs should be minimized. Theorem \ref{Thm1} implies that all paths at the local level achieve the minimum travel distances.
    \item Under the graph in global level with the minimized edge weights (i.e., minimized energy consumption in local level), a trajectory that minimizes the energy consumption for traveling from $\mathbf{u}_0$ to $\mathbf{u}_F$ is derived via the Dijkstra algorithm.
\end{enumerate} 
\end{proof}
We note that in an optimal UAV trajectory for Problem 2, the UAV's flying speed is fixed to $v_\mathrm{eff}\in\mathcal{V}$, which minimizes the UAV's propulsion energy consumption per unit distance and is represented as follows: $v_\mathrm{eff}=\mathrm{argmin}_{v\in\mathcal{V}\setminus 0} {{P_\mathrm{UAV}(v)/(\eta v)}}$
%\begin{align}
%v_\mathrm{eff}=\mathrm{argmin}_{v\in\mathcal{V}\setminus 0} {{P_\mathrm{UAV}(v)/\eta}\over v}. \label{eq:eff3} 
%\end{align}
\sh{The following proposition shows that the EGIM-B algorithm operates with polynomial time complexity under the assumption $|\mathcal{V}|=O(M)$, whose proof is immediate from Theorem \ref{Thm5} and hence omitted.}
\begin{proposition}\label{pro2} 
\sh{ If the number of CSs is smaller than or equal to the number of BSs, i.e., $N\leq M$, then the time complexity of the EGIM-B algorithm is $O(M^6)$. }
\end{proposition}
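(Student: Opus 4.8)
The plan is to reduce the statement directly to Theorem~\ref{Thm5} by exploiting that EGIM-B and GIM-B are identical algorithms except for the rule used to weight the edges of the global-level graph $\overrightarrow{G}_\mathrm{GL}$. First I would observe that Steps 1 and 2 of the two algorithms coincide verbatim: the outage test over all line segments on $V_\mathrm{all}$ and the local-level Dijkstra search for a shortest (minimum-distance) path between each pair of CSs are performed in exactly the same way, since by Theorem~\ref{Thm1} minimizing energy over a path also requires minimizing its length. Consequently these steps inherit the complexities established in the proof of Theorem~\ref{Thm5}, namely $O(M^6)$ for the outage test and $O(M^4N^2)$ for the local-level search.

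The only point requiring a separate argument is the construction of the edge weights at the global level. Here EGIM-B replaces the minimum-time weight $\ell_\mathrm{LO}/v(c_n,c_{n'})+\tau_{C_{n'}}$ by a minimum-energy weight evaluated at the fixed speed $v_\mathrm{eff}=\mathrm{argmin}_{v\in\mathcal{V}\setminus 0}P_\mathrm{UAV}(v)/(\eta v)$. I would argue that this change is asymptotically free: $v_\mathrm{eff}$ is distance-independent, so it is computed once by a single scan of $\mathcal{V}$ at cost $O(|\mathcal{V}|)=O(M)$; thereafter each edge weight is a constant-work evaluation of the feasibility test $d_\mathrm{fly}(v_\mathrm{eff})\geq \ell_\mathrm{LO}$ together with the closed-form energy $(\ell_\mathrm{LO}/v_\mathrm{eff})\cdot P_\mathrm{UAV}(v_\mathrm{eff})/\eta$. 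Thus building $\overrightarrow{G}_\mathrm{GL}$ and running Dijkstra on it remains $O(|V_\mathrm{GL}|^2)=O(N^2)$, exactly as in Step 3 of Theorem~\ref{Thm5}.

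Combining the three steps and using $N\leq M$, the dominant cost is the $O(M^6)$ outage test, which yields the claimed $O(M^6)$ bound. The main thing to be careful about is verifying that the energy-optimal speed genuinely need not be re-optimized per edge as a function of $\ell_\mathrm{LO}$: because $d_\mathrm{fly}(v)$ is inversely proportional to the energy-per-distance $P_\mathrm{UAV}(v)/(\eta v)$, the speed $v_\mathrm{eff}$ simultaneously minimizes energy and maximizes the reachable distance, so no distance-dependent speed search is ever introduced. Even if one did scan $\mathcal{V}$ at each of the $O(N^2)$ global-level edges, the added $O(MN^2)$ would still be dominated by $O(M^4N^2)$, so the bound is robust regardless of this implementation choice.
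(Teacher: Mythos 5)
Your proof is correct and follows exactly the route the paper intends: the paper omits the proof as ``immediate from Theorem~\ref{Thm5},'' and you supply precisely the missing details, namely that Steps 1 and 2 are unchanged and that the switch to energy-based edge weights at the global level costs only $O(|\mathcal{V}|)=O(M)$ extra because $v_\mathrm{eff}$ is distance-independent (and, as you correctly note, simultaneously maximizes $d_\mathrm{fly}$, so feasibility per edge is a constant-time check). No gaps.
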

%\begin{proof}
%It is immediate from Theorem \ref{Thm5} since the EGIM-B algorithm is almost the same with the GIM-B algorithm except that the UAV speed is  fixed to $v_\mathrm{eff}$.
%\end{proof}

\subsection{Maximum Deliverable Payload Weight}\label{sec5B}
In this subsection, we focus on deriving the maximum deliverable payload weight, instead of the minimum mission time. We can formulate the optimization problem as
\begin{align} 
&\textbf{Problem 3} \cr
&\mbox{Objective:~}~~~~ \max_{w_3\geq 0, \{\mathbf{u}(t),\ \psi(t),\ t\in[0,T]\}} w_3 \label{eq:33}\\
&\mbox{Constraints: }\cr
&0\leq T<\infty \label{eq:34} \\
&\text{\eqref{eq:9}-\eqref{eq:17}}, \label{eq:35}
\end{align}
where \eqref{eq:34} means that the UAV succeeds to deliver the payload from $\mathbf{u}_0$ to $\mathbf{u}_F$ within a finite time. We note that the propulsion power consumption $P_\mathrm{UAV}(v(t))$ in \eqref{eq:15} depends on the payload weight $w_3$. 

To solve Problem 3, we propose the bottleneck edge search method described in Algorithm \ref{Algo6}.  
%%%%%%% Algorithm 6: Bottleneck edge search method %%%%%%%%%
\begin{algorithm}[t]
\caption{Bottleneck Edge Search Method} \label{Algo6}
\textbf{Input:} $\mathbf{u}_0$, $\mathbf{u}_F$, $\mathcal{V}$, $\mathbf{c}_n$, $\ell_\mathrm{LO}(\mathbf{c}_n,\mathbf{c}_{n'})$, $h_\mathrm{Lfea}(\mathbf{c}_n,\mathbf{c}_{n'})$, $w_1$, $w_2$, $\epsilon_w$, $k_\mathrm{max}$ for $n\in [1:N+1]$, $n'\in[1:N]\cup \{N+2\}$, $n<n'$
\begin{algorithmic}[1]
\State $V_\mathrm{GL}\leftarrow\{\mathbf{u}_0,\mathbf{u}_F,\mathbf{c}_1,...,\mathbf{c}_N\}$, $E'_\mathrm{GL}\leftarrow \emptyset$, $w_3\leftarrow 0$
\State $\mathbf{c}_{N+1} \leftarrow \mathbf{u}_0$, $\mathbf{c}_{N+2} \leftarrow \mathbf{u}_F$, $h_\mathrm{sp}\leftarrow 1$
%\State $\epsilon_w\leftarrow 0.01$  \hfill\Comment{Sufficiently small positive constant}
\LeftComment{Step 1. Graph construction: Construct a graph $G'_\mathrm{GL}$ whose vertex set consists of CSs and edge set consists of edges between two connected CSs. The weight of each edge is the minimum travel distance between two CSs.}
\For {$n\in [1:N+1]$, $n'\in[1:N]\cup \{N+2\}$, $n<n'$}
    \LeftComment{Parameters $h_\mathrm{Lfea}$ and $\ell_\mathrm{LO}$ are described in Algorithm \ref{Algo4}.}
    \If{$h_\mathrm{Lfea}(\mathbf{c}_n,\mathbf{c}_{n'})=1$}
    \State $E'_\mathrm{GL}\leftarrow E'_\mathrm{GL}\cup (\mathbf{c}_n,\mathbf{c}_{n'},\ell_\mathrm{LO}(\mathbf{c}_n, \mathbf{c}_{n'}))$
    \EndIf
\EndFor
\State $G'_\mathrm{GL}\leftarrow (V_\mathrm{GL}, E'_\mathrm{GL})$ 
\LeftComment{Step 2. Bottleneck edge search: Find the longest connectivity-critical edge in $G'_\mathrm{GL}$.}
\LeftComment{Function \textbf{BFS} is described in line $1$ at Algorithm \ref{Algo4}.}
\State $h'_\mathrm{Gfea}\leftarrow$ \textbf{BFS}$(\mathbf{u}_0,\mathbf{u}_F,G'_\mathrm{GL})$
\If{$h'_\mathrm{Gfea}=1$}
\While{$h'_\mathrm{Gfea}=1$} 
    \State $(\mathbf{c}_k,\mathbf{c}_{k'},\ell_\mathrm{bott}) \leftarrow \!\!\!\!\!\!\!\!\!\!\!\!\!\!\!\!\!\!
    \underset{{(\mathbf{c}_n,\mathbf{c}_{n'},\ell_\mathrm{LO}(\mathbf{c}_n, \mathbf{c}_{n'}))\in E'_\mathrm{GL}}}{\mathrm{argmax}}$\!\!\!\!\!\!\!\!\!\!\!\!\!\!\!\!\! $\ell_\mathrm{LO}(\mathbf{c}_n, \mathbf{c}_{n'})$
    \LeftComment{Eliminate the longest edge in $E'_\mathrm{GL}$.}
    \State $E'_\mathrm{GL}\leftarrow E'_\mathrm{GL}\setminus (\mathbf{c}_k,\mathbf{c}_{k'},\ell_\mathrm{bott})$
    \State $G'_\mathrm{GL}\leftarrow (V_\mathrm{GL}, E'_\mathrm{GL})$
    \State $h'_\mathrm{Gfea}\leftarrow$ \textbf{BFS}$(\mathbf{u}_0,\mathbf{u}_F,G'_\mathrm{GL})$
\EndWhile
\Else
    \State $\ell_\mathrm{bott}\leftarrow \infty$
\EndIf
\LeftComment{Step 3. Weight search: Derive the maximum deliverable payload weight $w_3$.}
\LeftComment{Parameter $\epsilon_w>0$ is a sufficiently small constant.}
\While{$h_\mathrm{sp}=1$, $w_3\leq k_\mathrm{max}\epsilon_w$} \hfill%\Comment{\hs{$k_\mathrm{max}\in\mathbb{N}$}}
    \State $w_3\leftarrow w_3+\epsilon_w$
    \LeftComment{Function \textbf{ChkSp} is described in Algorithm \ref{Algo5}.}
    \State $(h_\mathrm{sp},v)\leftarrow$ \textbf{ChkSp}$(\ell_\mathrm{bott},\mathcal{V},w_1+w_2+w_3,w_2)$
\EndWhile
\State $w_3\leftarrow w_3-\epsilon_w$
\end{algorithmic}
\textbf{Output:} $w_3$
\end{algorithm}
%%%%%%%%%%%%%%%%%%%%%%%%
This algorithm initially sets zero payload weight, i.e., $w_3=0$. It first constructs an undirected weighted graph $G'_\mathrm{GL}$ whose vertex set consists of CSs (by
treating the initial and the final points also as CSs) and whose edge set includes an edge between two CSs only when there exists a path between the two CSs $(h_\mathrm{Lfea}=1)$, with the weight of the travel distance $\ell_\mathrm{LO}$ (in lines $3$-$8$). Note that the parameters $h_\mathrm{Lfea}$ and $\ell_\mathrm{LO}$ for each pair of CSs can be obtained by Algorithm \ref{Algo4}. After constructing the graph, it finds the bottleneck edge, which is the longest connectivity-critical edge in the graph (in lines $9$-$19$). To this end, the algorithm first checks whether $\mathbf{u}_0$ and $\mathbf{u}_F$ are connected by applying the function BFS \cite{West:2001}. If connected, it repeatedly eliminates the longest edge from the graph and then checks whether they are connected in the graph until not connected ($h'_\mathrm{Gfea}=0$). After the repetition ends, the most recently deleted edge is set as the bottleneck edge, with the edge weight $\ell_\mathrm{bott}$. Finally, the maximum deliverable payload weight over the graph is derived (in lines $20$-$24$). It first checks whether the UAV can travel the distance $\ell_\mathrm{bott}$ without battery replacement ($h_\mathrm{sp}=1$) or not ($h_\mathrm{sp}=0$) at $w_3=0$ via the function ChkSp whose pseudo code is in algorithm \ref{Algo5}. Then, it iterates this process while increasing $w_3$ in sufficiently small increments $\epsilon_w>0$ until the UAV cannot deliver the payload over the bottleneck edge or $w_3$ exceeds the limit $k_\mathrm{max}\epsilon_w$ of the payload weight. This algorithm outputs the maximum payload weight $w_3\in\{0,\epsilon_w,...,k_\mathrm{max}\epsilon_w\}$ which can be delivered over the bottleneck edge.

\sh{Now, the following theorems show that our bottleneck edge search method yields an optimal solution of Problem 3 in polynomial time.}
%\footnote{It can be shown that the bottleneck edge search method solves Problem~3 in polynomial time.}
\begin{theorem}\label{Thm6} % Optimality of bottleneck edge search method
Assume that the payload weight $w_3$ is selected from the set $\{0,\epsilon_w,...,k_\mathrm{max}\epsilon_w\}$ for $\epsilon_w>0$ and $k_\mathrm{max}\in\mathbb{N}$. Then, the bottleneck edge search method outputs the optimal solution for Problem 3 if the power consumption model $P_\mathrm{UAV}(v)$ is convex in the range of the UAV speed.  
\end{theorem}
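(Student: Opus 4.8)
The plan is to reduce Problem 3 to a graph-theoretic bottleneck problem and to exploit the monotone dependence of the UAV's maximum range on the payload weight. First I would fix a candidate payload weight $w_3$ and characterize exactly when Problem 3 admits a feasible trajectory for that weight. By Theorem~\ref{Thm1}, the shortest connectivity-feasible path between any pair of CSs has length $\ell_\mathrm{LO}(\mathbf{c}_n,\mathbf{c}_{n'})$, attained by a polyline through intersection points, and by Theorem~\ref{Thm3} (invoking convexity of $P_\mathrm{UAV}$) flying such a segment at a single fixed speed is range-optimal. Hence the UAV can cross the edge $(\mathbf{c}_n,\mathbf{c}_{n'})$ without battery replacement if and only if $\ell_\mathrm{LO}(\mathbf{c}_n,\mathbf{c}_{n'})\le d_\mathrm{fly}^{\max}(w_3)$, where $d_\mathrm{fly}^{\max}(w_3)\triangleq\max_{v\in\mathcal{V}}d_\mathrm{fly}(v)$ is the maximum range at weight $w_3$. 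Because the battery is reset to full at every CS by \eqref{eq:17}, a feasible trajectory for weight $w_3$ exists if and only if $\mathbf{u}_0$ and $\mathbf{u}_F$ are connected in the subgraph of $G'_\mathrm{GL}$ consisting of exactly those edges with $\ell_\mathrm{LO}\le d_\mathrm{fly}^{\max}(w_3)$; any feasible multi-hop path uses hops no shorter than the corresponding $\ell_\mathrm{LO}$, and conversely a connected traversable subgraph yields a realizable trajectory.

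Next I would establish the two monotonicity facts that make the bottleneck reduction valid. From \eqref{eq:1}--\eqref{eq:3}, increasing $w_3$ raises the induced-power term through $P_2(w)$ and $v_0(w)$, hence does not decrease $P_\mathrm{UAV}(v)$ and does not increase $d_\mathrm{fly}(v)$ for any $v$; thus $d_\mathrm{fly}^{\max}(w_3)$ is non-increasing in $w_3$ and feasibility is a threshold property. Moreover, since the traversable edges are precisely those of length at most the threshold $d_\mathrm{fly}^{\max}(w_3)$, the endpoints $\mathbf{u}_0,\mathbf{u}_F$ are connected by traversable edges if and only if the minimax path value $\ell_\mathrm{bott}\triangleq\min_{\text{paths }P}\max_{e\in P}\ell_\mathrm{LO}(e)$ satisfies $\ell_\mathrm{bott}\le d_\mathrm{fly}^{\max}(w_3)$. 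Combining the threshold property with this characterization, the maximum admissible payload is exactly the largest $w_3$ in the grid $\{0,\epsilon_w,\dots,k_\mathrm{max}\epsilon_w\}$ for which the UAV can fly the single distance $\ell_\mathrm{bott}$ without recharging, which is precisely the value returned by the final weight-search loop through its calls to \textbf{ChkSp}.

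It then remains to verify that Algorithm~\ref{Algo6} actually computes $\ell_\mathrm{bott}$. I would argue that the repeated deletion of the currently longest edge while re-testing $\mathbf{u}_0$--$\mathbf{u}_F$ connectivity with \textbf{BFS} is the standard minimax-path procedure: at the iteration where removing the longest remaining edge first disconnects $\mathbf{u}_0$ from $\mathbf{u}_F$, every surviving edge has length at most that of the removed edge, so before the deletion a path existed using only edges of length $\le\ell_\mathrm{bott}$ (giving $\min_P\max_e\ell_\mathrm{LO}\le\ell_\mathrm{bott}$), while after deletion no path with strictly smaller maximum edge survives (giving the reverse inequality); hence the last deleted weight equals $\ell_\mathrm{bott}$. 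The degenerate case in which $\mathbf{u}_0$ and $\mathbf{u}_F$ are already disconnected sets $\ell_\mathrm{bott}=\infty$, correctly forcing the algorithm to report the empty payload (or infeasibility) since even $w_3=0$ cannot bridge a disconnected graph.

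I expect the main obstacle to be the rigorous justification of the bottleneck characterization, namely proving that feasibility of the \emph{whole} multi-hop Problem 3 collapses to traversability of the \emph{single} longest connectivity-critical edge, together with confirming that the greedy longest-edge-removal loop returns exactly the minimax value $\ell_\mathrm{bott}$ (including correct handling of ties in edge length and of the degenerate infeasible case). Closely tied to this is ensuring that the reduction from the continuous-time formulation \eqref{eq:33}--\eqref{eq:35} to the finite graph is airtight: the invocations of Theorems~\ref{Thm1} and~\ref{Thm3} must be stated carefully, since those results were established for the travel-time and energy objectives rather than for range maximization under a payload-dependent power model, and one must verify that fixed-speed flight at the most efficient speed in $\mathcal{V}$ indeed attains $d_\mathrm{fly}^{\max}(w_3)$.
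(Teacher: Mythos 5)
Your proposal is correct and follows essentially the same route as the paper's proof: reduce feasibility to traversability of the single longest connectivity-critical (bottleneck) edge, and invoke Theorem~\ref{Thm3} to justify restricting to fixed-speed flight when checking that edge. The paper's own argument is considerably terser and leaves implicit the points you spell out (monotonicity of range in $w_3$, the minimax characterization of $\ell_\mathrm{bott}$, and correctness of the longest-edge-deletion loop), so your added detail is a faithful elaboration rather than a different approach.
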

\begin{proof} 
If the UAV can travel the bottleneck edge without battery replacement, then the payload can be delivered from $\mathbf{u}_0$ to $\mathbf{u}_F$ since it can be also delivered over an edge shorter than the bottleneck edge under the battery constraint. Hence, it is sufficient only to consider whether the payload can be delivered over the bottleneck edge. For finding the maximum deliverable payload weight, we note that it is sufficient only to consider a fixed speed while traveling the bottleneck edge, as justified in Theorem \ref{Thm3}. Consequently, our method yields the optimal solution for Problem 3.
\end{proof}
\sh{
\begin{theorem}\label{Thm7} % Complexity of bottleneck edge search method
Assume that the payload weight $w_3$ is selected from the set $\{0,\epsilon_w,...,k_\mathrm{max}\epsilon_w\}$ for $\epsilon_w>0$ and $k_\mathrm{max}\in\mathbb{N}$. Then, the 
time complexity of the bottleneck edge search method is $O(N^4+k_\mathrm{max}|\mathcal{V}|)$.
\end{theorem}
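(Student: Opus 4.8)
The plan is to bound the cost of each of the three steps of Algorithm~\ref{Algo6} separately and then sum. First I would fix the relevant cardinalities: since the initial and final points are treated as CSs, the vertex set satisfies $|V_\mathrm{GL}|=O(N)$, and hence the undirected graph $G'_\mathrm{GL}$ has at most $|E'_\mathrm{GL}|=O(N^2)$ edges. I would also note that the quantities $h_\mathrm{Lfea}(\mathbf{c}_n,\mathbf{c}_{n'})$ and $\ell_\mathrm{LO}(\mathbf{c}_n,\mathbf{c}_{n'})$ are supplied as inputs (precomputed by Algorithm~\ref{Algo4}), so Step~1 merely loads these values over the $O(N^2)$ pairs of CSs and therefore costs $O(N^2)$.

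For Step~2, the main work is bounding the cost of the \textbf{while} loop. Each call to the function BFS on $G'_\mathrm{GL}$ runs in $O(|V_\mathrm{GL}|+|E'_\mathrm{GL}|)=O(N^2)$ \cite{West:2001}. Within a single iteration, finding the longest edge via the $\mathrm{argmax}$ scans $E'_\mathrm{GL}$ at cost $O(N^2)$, and one additional BFS call costs $O(N^2)$. The key observation is that every iteration deletes exactly one edge from $E'_\mathrm{GL}$, so the loop performs at most $|E'_\mathrm{GL}|=O(N^2)$ iterations before $\mathbf{u}_0$ and $\mathbf{u}_F$ become disconnected. Therefore Step~2 has complexity $O(N^2)\cdot O(N^2)=O(N^4)$.

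For Step~3, the \textbf{while} loop increments $w_3$ by $\epsilon_w$ until $w_3$ exceeds $k_\mathrm{max}\epsilon_w$, so it executes at most $k_\mathrm{max}$ times. Each iteration invokes the function ChkSp, which compares $\ell_\mathrm{bott}$ against the maximum flight distance $d_\mathrm{fly}(v)$ for every speed $v\in\mathcal{V}$ and hence costs $O(|\mathcal{V}|)$. Thus Step~3 has complexity $O(k_\mathrm{max}|\mathcal{V}|)$. Summing the three contributions yields $O(N^2)+O(N^4)+O(k_\mathrm{max}|\mathcal{V}|)=O(N^4+k_\mathrm{max}|\mathcal{V}|)$, as claimed.

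The step I expect to be the main obstacle is the loop analysis in Step~2: one must argue carefully that the number of iterations is governed by the edge count (because a single edge is removed per pass) rather than by anything larger, and that both the $\mathrm{argmax}$ scan and each BFS call stay at $O(N^2)$. The remaining estimates are routine bookkeeping, once one leverages the fact that the local-level quantities $h_\mathrm{Lfea}$ and $\ell_\mathrm{LO}$ are taken as given rather than recomputed inside this algorithm.
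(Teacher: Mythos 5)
Your proposal is correct and follows essentially the same decomposition as the paper's proof: bound Step 1 by $O(N^2)$ (edges are merely loaded from precomputed $h_\mathrm{Lfea}$ and $\ell_\mathrm{LO}$), bound Step 2 by $O(N^4)$ via at most $|E'_\mathrm{GL}|=O(N^2)$ iterations each costing $O(N^2)$ for the longest-edge scan plus BFS, and bound Step 3 by $O(k_\mathrm{max}|\mathcal{V}|)$. The only cosmetic difference is that you cite BFS as $O(|V_\mathrm{GL}|+|E'_\mathrm{GL}|)$ while the paper uses $O(|V_\mathrm{GL}|^2)$; both evaluate to $O(N^2)$ here, so the conclusion is unaffected.
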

\begin{proof}
Note that $|V_\mathrm{GL}|=O(N)$ and $|E'_\mathrm{GL}|=O(N^2)$. The steps in Algorithm \ref{Algo6} have the following complexities:
\begin{itemize}
\item Step 1. Graph construction: This step has the complexity $O(N^2)$ since it just loads all the feasible edges (i.e., $h_\mathrm{Lfea}=1$) between two vertices.
\item Step 2. Bottleneck edge search: First, the complexity of finding an edge to be eliminated from graph $G'_\mathrm{GL}$, which is a longest edge in the graph, is $O(|E'_\mathrm{GL}|)=O(N^2)$. Next, the complexity of applying function BFS in graph $G'_\mathrm{GL}$ is $O(|V_\mathrm{GL}|^2)=O(N^2)$ \cite{West:2001}. Since such process is repeated at most $|E'_\mathrm{GL}|$ times, the complexity of searching a bottleneck edge is $(O(N^2)+O(N^2))\cdot |E'_\mathrm{GL}|=O(N^4)$.
\item Step 3. Weight search: The complexity of checking whether each payload $w_3\in\{0,\epsilon_w,...,k_\mathrm{max}\epsilon_w\}$ is deliverable via the function ChkSp is $(k_\mathrm{max}+1)\cdot O(|\mathcal{V}|)=O(k_\mathrm{max}|\mathcal{V}|)$.
\end{itemize}
Consequently, the complexity of the bottleneck edge search method is $O(N^4+k_\mathrm{max}|\mathcal{V}|)$.
\end{proof}
}

\section{Numerical Results}\label{sec6}

This section provides a range of numerical results to assess the effectiveness of the GIM-B algorithm and the bottleneck edge search method. It is assumed that $M=19$ BSs and $N=5$ CSs are deployed across a $10\mathrm{km}\times 10\mathrm{km}$ region wherein the UAV flies from $\mathbf{u}_0=(-300\mathrm{m},300\mathrm{m})$ to $\mathbf{u}_F=(6650\mathrm{m},7900\mathrm{m})$ at the same altitude $H=100\mathrm{m}$.
The coverage radius of $\mathrm{BS}_m$ is established by $d_0=1484.6\mathrm{m}$ and $\lambda_m\in[0,800]\mathrm{m}$ for $m\in[1:M]$. Note that the base coverage radius $d_0$ can be obtained via the communication model in \eqref{eq:3.1}-\eqref{eq:6}, whose parameters are described in Table \ref{Tabcomm}.  
%%%%% Table_comm %%%%%%%%%%%%%%%%
\begin{table}
\caption{Parameters for communication model}\label{Tabcomm}
\centering
\begin{tabular}{@{} c || c | c @{}}
\cline{1-3}
Symbol &  Definition &  Value\\ \cline{1-3}
$H_\mathrm{BS}$ & Height of BS & $35\mathrm{m}$\\ \cline{1-3}
$\mathrm{SINR_\mathrm{th}}$ & Hard SINR threshold & $12\mathrm{dB}$\\ \cline{1-3}
$\mathrm{SNR_\mathrm{ref}}$ & SNR at distance $1\mathrm{m}$ in free space & $95\mathrm{dB}$\\ \cline{1-3}
$\mu_1$ & Parameter for LoS probability & $4.880$\\ \cline{1-3}
$\mu_2$ & Parameter for LoS probability & $0.429$\\ \cline{1-3}
$\zeta_1$ & Excessive pathloss (LoS) & $0.1\mathrm{dB}$\\ \cline{1-3}
$\zeta_2$ & Excessive pathloss (NLoS) & $21\mathrm{dB}$\\ \cline{1-3}
\end{tabular}
\end{table}
%%%%%%%%%%%%%%%%%%%%%
\sh{The total delay for battery replacement at each CS is set to 100s, i.e., $\tau_{C_n}=100\mathrm{s}$ for $n\in[1:N]$, by considering approximately 60 seconds for the replacement itself \cite{Lee:2015} and about 40 seconds of excessive penalty factors, but we note that  our proposed algorithms also operate when the delay at each CS is not the same.} 
 The UAV can adjust its speed $v$ in the set $\mathcal{V}=[0:1:30]\mathrm{m/s}$. The total UAV weight, inclusive of its payload, is assumed to be $w=2.97\mathrm{kg}$, where $w_1=1.07\mathrm{kg}$, $w_2=0.9\mathrm{kg}$, and $w_3=1\mathrm{kg}$. In the propulsion power consumption model \eqref{eq:1}, $P_1$, $P_2(w)$, and the mean rotor induced speed for hovering $v_0(w)$  are given as the following:
\begin{align}
P_1&={(\delta_p\rho/ 8)} (N_rN_bL_cR_r) v_\mathrm{tip}^3,\label{eq:s1}\\
P_2(w)&=(1+k_\mathrm{cf}){(wg)^{3/2}/\sqrt{2\rho N_r\pi R_r^2}},\label{eq:s2}\\
v_0(w)&=\sqrt{{wg}/({2\rho N_r\pi R_r^2})},\label{eq:s3}
\end{align}
where the parameters in \eqref{eq:s1}-\eqref{eq:s3} are described in Table \ref{Tab3}. 
%For simulations, our choice of parameter values for the communication model \eqref{eq:3.1}-\eqref{eq:6}, the power consumption model \eqref{eq:s1}-\eqref{eq:s3}, and the battery model  \eqref{eq:2}-\eqref{eq:3} are summarized in Tables \ref{Tabcomm}, \ref{Tab3}, and \ref{Tab4}, respectively.
For simulations, our choice of parameter values for the communication model \eqref{eq:3.1}-\eqref{eq:6}, and the power consumption model \eqref{eq:s1}-\eqref{eq:s3} and battery model \eqref{eq:2}-\eqref{eq:3} are summarized in Tables \ref{Tabcomm} and \ref{Tab3}, respectively.\footnote{We refered to the communication parameters for suburban environment in \cite{Al-Hourani:2014,Al-Hourani:2014_2,TR:2018} and the power consumption and battery parameters in \cite{Zhang:2021_2,Zeng:2019}.}
%%%%% Table3%%%%%%%%%%%%%%%%
\begin{table}
\caption{Parameters for power consumption and battery model}\label{Tab3}
\centering
\begin{tabular}{@{} c || c | c @{}}
\cline{1-3}
Symbol &  Definition &  Value\\ \cline{1-3}
$\delta_p$ & Profile drag coefficient & $0.012$\\ \cline{1-3}
$N_r$ & Number of rotors (quadcopter) & $4$\\ \cline{1-3}
$N_b$ & Number of blades per rotor & $4$\\ \cline{1-3}
$L_c$ & Blade chord length & $0.0157\mathrm{m}$\\ \cline{1-3}
$R_r$ & Rotor radius & $0.07\mathrm{m}$\\ \cline{1-3}
$v_\mathrm{tip}$ & Tip speed of a blade & $14\mathrm{m/s}$\\ \cline{1-3}
$k_\mathrm{cf}$ & Incremental correlation factor & $0.1$\\ \cline{1-3}
$S_\mathrm{FP}$ & Fuselage equivalent flat area & $0.03\mathrm{m^2}$\\ \cline{1-3}
$\rho$ & Air density & $1.225\mathrm{kg/m^3}$\\ \cline{1-3}
$g$ & Gravitational acceleration & $9.807\mathrm{m/s^2}$\\ \cline{1-3}
$\epsilon_\mathrm{batt}$ & Energy density of battery per kg & $540\mathrm{kJ/kg}$\\ \cline{1-3}
$\gamma$ & Depth of discharge & $0.7$\\ \cline{1-3}
$\eta$ & Ratio of transferable energy & $0.7$\\ \cline{1-3}
$r_\mathrm{safe}$ & Energy reserving factor & $1.2$\\ \cline{1-3}
\end{tabular}
\end{table}
%%%%%%%%%%%%%%%%%%%%%
\begin{comment}
%%%%% Table4%%%%%%%%%%%%%%%%
\begin{table}
\caption{Parameters for battery model}\label{Tab4}
\centering
\begin{tabular}{@{} c || c | c @{}}
\cline{1-3}
Symbol &  Definition &  Value\\ \cline{1-3}
$\epsilon_\mathrm{batt}$ & Energy density of battery per kg & $540\mathrm{kJ/kg}$\\ \cline{1-3}
$\gamma$ & Depth of discharge & $0.7$\\ \cline{1-3}
$\eta$ & Ratio of transferable energy & $0.7$\\ \cline{1-3}
$r_\mathrm{safe}$ & Energy reserving factor & $1.2$\\ \cline{1-3}
\end{tabular}
\end{table}
%%%%%%%%%%%%%%%%%%%%%
\end{comment}
\begin{comment}
Fig. \ref{Figs1} plots the propulsion power consumption $P_\mathrm{UAV}(v)$ according to the flying speed $v$ in different payload weights $w_3$, where we can numerically check that $P_\mathrm{UAV}(v)$ is a convex function for $v\in[0,30]\mathrm{m/s}$ and hence the conditions in Theorems \ref{Thm3} and \ref{Thm4} hold. 
%%%%% Figs1%%%%%%%%%%%%%%%%
\begin{figure}
\centering
\includegraphics[width=0.7\columnwidth]{}
\caption{Propulsion power consumption \eqref{eq:1} versus speed $v$ for different payload weights $w_3$.}\label{Figs1}
\end{figure}
%%%%%%%%%%%%%%%%%%%%%
\end{comment}

Fig. \ref{Figs2} illustrates the flight trajectory and the corresponding mission time $T$ for both the GIM-B and existing algorithms in \cite{Zhang:2019,Chen:2020}.
%%%%% Figs2%%%%%%%%%%%%%%%%
\begin{figure}
\centering
\includegraphics[width=0.8\columnwidth]{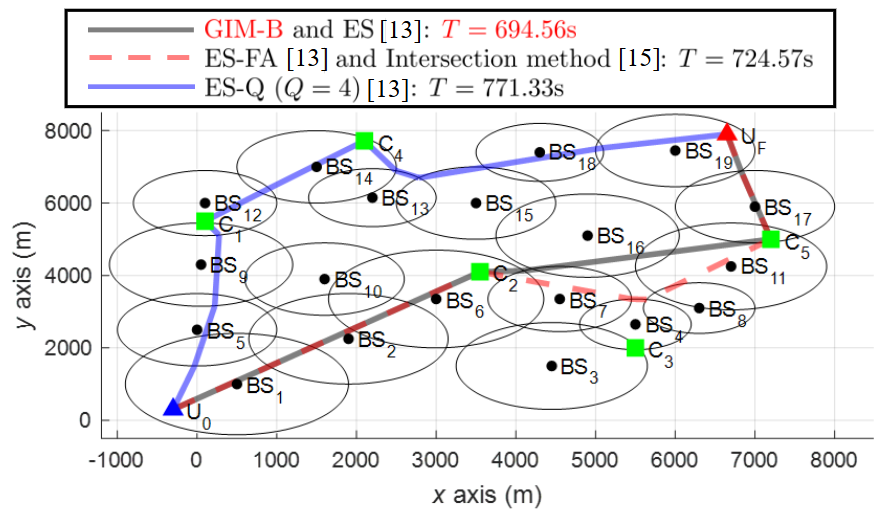}
\caption{Comparison of trajectory and the corresponding mission time $T$ for algorithms to solve Problem 1.}\label{Figs2}
\end{figure} 
%%%%%%%%%%%%%%%%%%%%% 
The ES-FA algorithm \cite{Zhang:2019} and intersection method \cite{Chen:2020} yield the identical trajectory, which differs from the optimal result of our approach and the ES algorithm \cite{Zhang:2019}. The ES-Q algorithm \cite{Zhang:2019} at $Q=2$ fails to output any UAV trajectory owing to the battery limit. At $Q=4$, it exhibits a higher time complexity compared to our algorithm since $QN>M$. Nonetheless, it achieves a considerably longer mission time $T$ even than the other sub-optimal algorithms, because it fails to derive a path between $\mathbf{u}_0$ and $\mathbf{c}_2$ under the quantization points that can be traveled within the battery capacity.
Tables \ref{TabTime} and \ref{TabEnergy} show the excessive  mission time and the excessive propulsion energy consumption of the existing algorithms \cite{Zhang:2019,Chen:2020} compared to those of GIM-B algorithm,  respectively,  over 100 distinct maps. In the process of constructing each map, we first randomly select the locations of the initial and final points, BSs, and CSs according to the 2D uniform distribution over the $10\mathrm{km}\times 10\mathrm{km}$ region, and the coverage offsets of BSs according to the uniform distribution over the range $[0.800]\mathrm{m}$. Then, we accept it as a feasible map only if there exists a feasible trajectory between the initial and the final points under the connectivity and battery constraints. In Table \ref{TabTime}, the ES-FA algorithm \cite{Zhang:2019} and the intersection method \cite{Chen:2020} achieve the optimal travel time for more than half of the maps, because they can search an optimal trajectory if they succeed in finding an optimal BS association sequence. The ES-Q algorithm \cite{Zhang:2019} with $Q=4$ does not achieve the optimal travel time for all the maps, as it restricts the trajectories to lie on a weighted graph constructed based on the quantization points. We note that for a few maps, some suboptimal algorithms fail to yield any feasible UAV trajectory. Such infeasible cases occur due to the battery constraint, i.e., for the set of breakpoints that the corresponding algorithm considers, there is no path that the UAV can travel under the battery constraint. In Table \ref{TabEnergy}, we can see that some suboptimal algorithms achieve lower energy consumption compared to that of GIM-B algorithm for a few maps, since the GIM-B algorithm aims to minimize the mission time. It is desirable to use the EGIM-B algorithm in Section \ref{sec5A} if the focus is on the energy efficiency.  
\begin{comment}
%%%%% Figs2.1%%%%%%%%%%%%%%%%
\begin{figure}
\centering
\includegraphics[width=1.0\columnwidth]{photo/Figs2.1.png}
\caption{Comparison of normalized mission time $T^*$ for algorithms to solve Problem 1 over 100 distinct maps.}\label{Figs2.1}
\end{figure} 
%%%%%%%%%%%%%%%%%%%%%
\end{comment}
%%%%% Table_Time %%%%%%%%%%%%%%%%
\begin{table}
\caption{Excessive mission time compared to that of GIM-B algorithm over 100 random maps}\label{TabTime}
\centering
\begin{tabular}{@{} c || c | c @{}}
\cline{1-3}
\multirow{2}{2.0cm}{\centering Excessive mission time  [\%]} & \multicolumn{2}{c}{Number of maps}\\ \cline{2-3}
 & \makecell{ES-FA \cite{Zhang:2019} \\ Intersection method \cite{Chen:2020}} & \makecell{ES-Q \cite{Zhang:2019}\\ $(Q=4)$ }\\ \cline{1-3}
$0\%$ & 66 & 0 \\ \cline{1-3}
$(0,1]\%$ & 21 & 85 \\ \cline{1-3}
$(1,5]\%$ & 10 & 13 \\ \cline{1-3}
$(5,10]\%$ & 3 & 0 \\ \cline{1-3}
Infeasible & 1 & 2 \\ \cline{1-3}
\end{tabular}
\end{table}
%%%%%%%%%%%%%%%%%%%%%
%%%%% Table_Energy %%%%%%%%%%%%%%%%
\begin{table}
\caption{Excessive energy consumption compared to that of GIM-B algorithm over 100 random  maps}\label{TabEnergy}
\centering
\begin{tabular}{@{} c || c | c @{}}
\cline{1-3}
\multirow{2}{2.5cm}{\centering Excessive energy consumption [\%]} & \multicolumn{2}{c}{Number of maps}\\ \cline{2-3}
 & \makecell{ES-FA \cite{Zhang:2019} \\ Intersection method \cite{Chen:2020}} & \makecell{ES-Q \cite{Zhang:2019}\\ $(Q=4)$ }\\ \cline{1-3}
$(-20,-1]\%$ & 1 & 2 \\ \cline{1-3}
$(-1,0)\%$ & 4 & 9 \\ \cline{1-3}
$0\%$ & 65 & 0 \\ \cline{1-3}
$(0,1]\%$ & 21 & 87 \\ \cline{1-3}
$(1,5]\%$ & 8 & 0 \\ \cline{1-3}
Infeasible & 1 & 2 \\ \cline{1-3}
\end{tabular}
\end{table}
%%%%%%%%%%%%%%%%%%%%%
Fig. \ref{Figs3} shows the optimal graph at the global level and the corresponding maximum possible speed $v_\mathrm{max}$ for each edge under the environment in Fig. \ref{Figs2}. We can see that for each edge, the maximum travel speed $v_\mathrm{max}$ decreases as its travel distance increases.
%%%%% Figs3%%%%%%%%%%%%%%%%
\begin{figure}
\centering
\includegraphics[width=0.8\columnwidth]{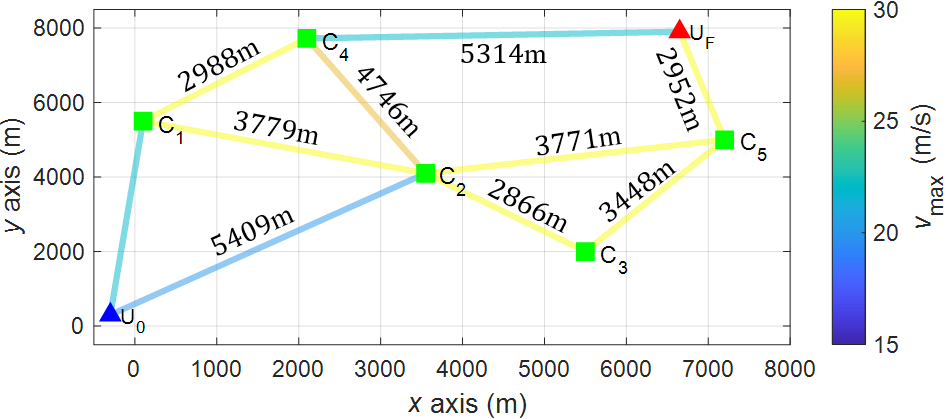}
\caption{Optimal graph at the global level and the corresponding maximum possible speed $v_\mathrm{max}$ for each edge.}\label{Figs3}
\end{figure}
%%%%%%%%%%%%%%%%%%%%%

Fig. \ref{Figs4} compares the optimal UAV trajectory and the corresponding mission time $T$ for different payload weight $w_3$, battery weight $w_2$, and delay $\tau_{C_1}$ at charging station $C_1$, where the locations of CSs are changed from Fig. \ref{Figs2}.
%\footnote{In Fig. \ref{Figs4}, the locations of CSs $C_3$ and $C_4$ are changed from Fig. \ref{Figs2}.} 
We can see that the UAV avoids $C_1$ with the higher delay $\tau_{C_1}=200\mathrm{s}$ for battery replacement (red), it visits more CSs with the larger payload weight $w_3=1.5\mathrm{kg}$ (blue), and it visits less CSs with the larger battery weight $w_2=1.2\mathrm{kg}$ (green).
%%%%% Figs4%%%%%%%%%%%%%%%%
\begin{figure}
\centering
\includegraphics[width=0.8\columnwidth]{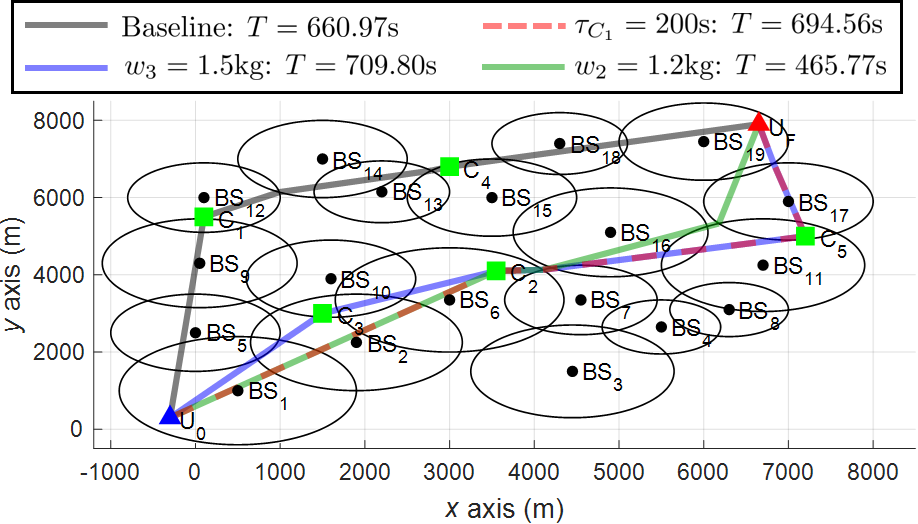}
\caption{Comparison of optimal trajectory and the corresponding mission time $T$ for the cases that delay $\tau_{C_1}=200\mathrm{s}$ for battery replacement, payload weight $w_3=1.5\mathrm{kg}$, and battery weight $w_2=1.2\mathrm{kg}$.}\label{Figs4}
\end{figure}
%%%%%%%%%%%%%%%%%%%%%
Fig. \ref{Figs5} plots the optimal mission time $T$ across different delays for battery replacement and payload weights $w_3\in[0:0.1:3.5]\mathrm{kg}$ under the same environment as in Fig. \ref{Figs4}. We can verify that $T$ increases as the delay for battery replacement and the payload weight increase and that the payload cannot be delivered from $\mathbf{u}_0$ to $\mathbf{u}_F$ if $w_3$ is too large, i.e., if it exceeds $2.8\mathrm{kg}$ under this setting.
%%%%% Figs5%%%%%%%%%%%%%%%%
\begin{figure}
\centering
\includegraphics[width=0.7\columnwidth]{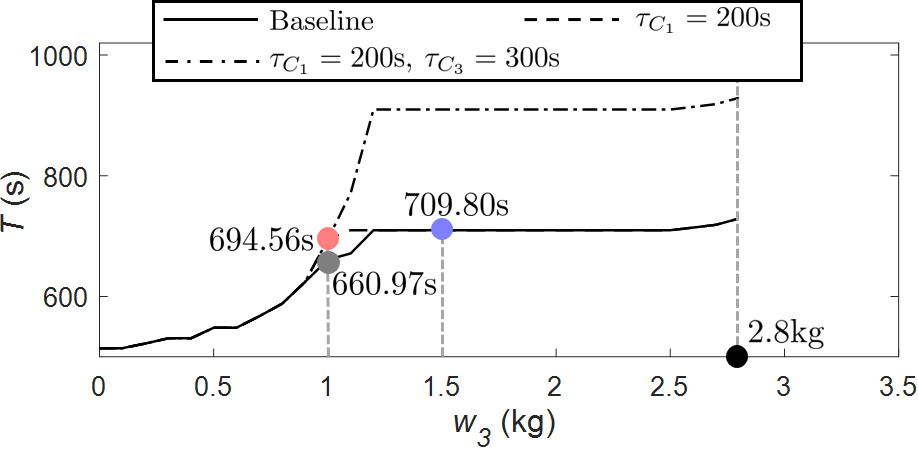}
\caption{Payload weight $w_3$ versus optimal mission time $T$ for different $\tau_{C_1}$ and $\tau_{C_3}$.}\label{Figs5}
\end{figure}
%%%%%%%%%%%%%%%%%%%%%

%%%%%%%%% Added in journal %%%%%%%%%%%%%
%In Fig. \ref{Figs6}, the optimal delivery time $T$ is plotted for the different battery swapping delays and the battery weights $w_2\in[0:0.05:1.5]\mathrm{kg}$ under the same environment as in Fig. \ref{Figs4}. We can see that $T$ increases as $w_2$ decreases and the payload cannot be delivered from $\mathbf{u}_0$ to $\mathbf{u}_F$ if $w_2$ is too small, i.e., if it is less than $0.6\mathrm{kg}$ under this setting.
\begin{comment}
%%%%% Figs6%%%%%%%%%%%%%%%%
\begin{figure}
\centering
\includegraphics[width=0.9\columnwidth]{}
\caption{Battery weight $w_2$ versus optimal delivery time $T$ for different $\tau_{C_1}$ and $\tau_{C_3}$.}\label{Figs6}
\end{figure}
%%%%%%%%%%%%%%%%%%%%%
\end{comment}

Fig. \ref{Figs7} compares the optimal mission time $T$ for the case that the UAV can  fly with a fixed speed of $v_\mathrm{fix}\in [15:1:30]\mathrm{m/s}$ (fixed speed) and for the case that it can change its speed in the speed set $\mathcal{V}$ (dynamic speed) under the same environment as in Fig. \ref{Figs4}. Note that in the fixed speed case, the UAV chooses its speed in the set $\{0,v_\mathrm{fix}\}$.  We can check that the dynamic speed case has a lower travel time than the fixed speed case for every $v_\mathrm{fix}\in [15:1:30]\mathrm{m/s}$ because the maximum allowable speed between each pair of CSs at the local level depends on its travel distance as shown in Fig. \ref{Figs3}. In small battery weight $w_2=0.6\mathrm{kg}$, any trajectory from $u_0$ to $u_F$ cannot be derived in the fixed speed case with $v_\mathrm{fix}>22\mathrm{m/s}$ since flying at a high speed is not efficient in terms of the energy consumption. %as shown in Fig. \ref{Figs1}.
%%%%% Figs7%%%%%%%%%%%%%%%%
\begin{figure}
\centering
\includegraphics[width=0.7\columnwidth]{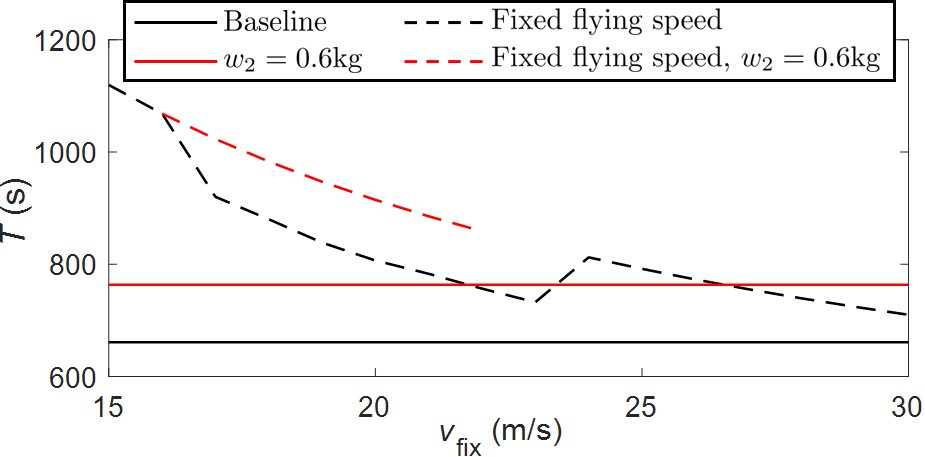}
\caption{Flying speed $v_\mathrm{fix}$ versus optimal mission time $T$ for the cases that the UAV can only fly with a fixed speed $v_\mathrm{fix}\in [15:1:30]\mathrm{m/s}$ and change its speed in the speed set $\mathcal{V}$.}\label{Figs7}
\end{figure}
%%%%%%%%%%%%%%%%%%%%%

%%% EGIM-B %%%%
% Fig. 16: error bar는 standard deviation. 
Fig. \ref{Figs7.1} illustrates the UAV trajectory and the corresponding mission time $T$ and propulsion energy consumption $E_\mathrm{tot}$ for the GIM-B and EGIM-B algorithms, where the locations of CSs are changed from Fig. \ref{Figs2}. We can see that for the GIM-B algorithm, the UAV flies at the maximum possible speed under the battery constraint and frequently visits CSs to increase its flight speed thereby decreasing the mission time, with the sacrifice of the travel distance and the energy consumption. On the other hand, for the EGIM-B algorithm, the UAV always flies at the fixed speed $v_\mathrm{eff}=20\mathrm{m/s}$ which minimizes its energy consumption per unit distance and   minimizes the travel distance to reduce the propulsion energy consumption, with the sacrifice of the mission time. To evaluate the average performance of the GIM-B and EGIM-B algorithms, for the same 100 maps used for Tables \ref{TabTime} and \ref{TabEnergy}, we calculate the average relative performance. On average, the mission time and the energy consumption from the EGIM-B algorithm are $28.39\%$ longer and $14.10\%$ lower than those from the GIM-B algorithm, respectively. 
%%%%% Figs7.1%%%%%%%%%%%%%%%%
\begin{figure}
\centering
\includegraphics[width=0.8\columnwidth]{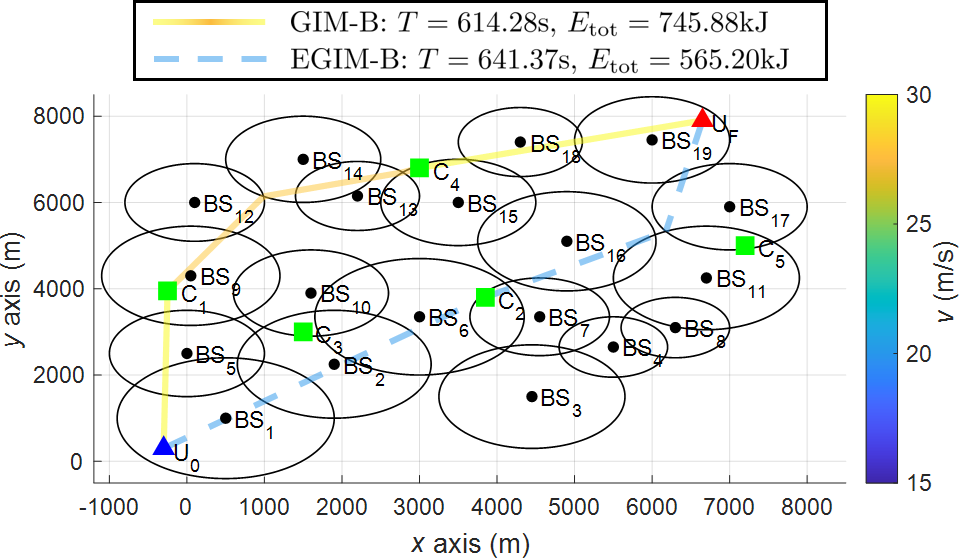}
\caption{Comparison of UAV trajectory and the corresponding mission time $T$ and propulsion energy consumption $E_\mathrm{tot}$ for GIM-B and EGIM-B algorithms.}\label{Figs7.1}
\end{figure}
%%%%%%%%%%%%%%%%%%%%%
\begin{comment}
Fig. \ref{Figs7.2} compares the average values of normalized mission time $T^*$ and normalized propulsion energy consumption $E_\mathrm{tot}^*$ for those two algorithms over 100 maps that are identical to the maps for Fig. \ref{Figs2.1}, where the baselines to normalize the mission time and the energy consumption are the GIM-B and EGIM-B algorithms, respectively, and both of them are normalized to $100\%$ in each map. In this figure, each error bar represents one standard deviation for the corresponding normalized values over 100 maps. We can check that the GIM-B algorithm outperforms the EGIM-B algorithm in terms of mission time, while the opposite result is shown in the aspect of energy consumption.
%%%%% Figs7.2%%%%%%%%%%%%%%%%
\begin{figure}
\centering
\includegraphics[width=0.85\columnwidth]{photo/Figs7.2.png}
\caption{Comparison of normalized mission time $T^*$ and normalized propulsion energy consumption $E_\mathrm{tot}^*$ for  GIM-B and EGIM-B algorithms over 100 distinct maps.}\label{Figs7.2}
\end{figure} 
%%%%%%%%%%%%%%%%%%%%%
\end{comment}
Fig. \ref{Figs8} plots the maximum deliverable weight $w_3$ according to the battery weights $w_2\in[0.5:0.02:1]\mathrm{kg}$ for different unavailable CSs under the same environment as in Fig. \ref{Figs4}, where we say that charging station $C_n$ is unavailable if its delay for battery replacement is $\tau_{C_n}=\infty$.
We can check that the maximum deliverable weight decreases as $w_2$ decreases and the number of unavailable CSs increases.
%%%%% Figs8%%%%%%%%%%%%%%%%
\begin{figure}
\centering
\includegraphics[width=0.7\columnwidth]{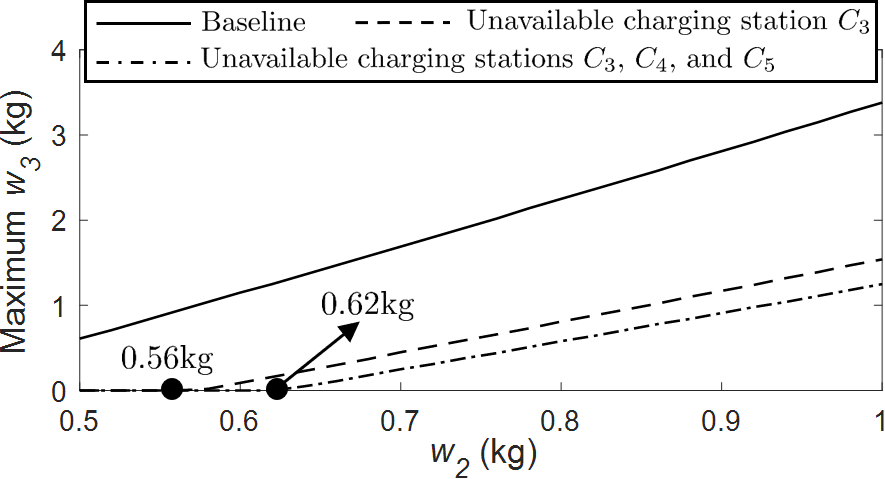}
\caption{Battery weight $w_2$ versus maximum deliverable payload $w_3$ for different unavailable CSs.}\label{Figs8}
\end{figure}
%%%%%%%%%%%%%%%%%%%%%

\sh{Finally, let us compare the actual execution time of our GIM-B algorithm with existing graph theory-based algorithms: ES-Q algorithm with $Q=4$ \cite{Zhang:2019} and intersection method \cite{Chen:2020}, for the same 100 maps used for Tables \ref{TabTime} and \ref{TabEnergy}. On average\footnote{\sh{These three algorithms were executed by MATLAB R2024a using Intel Core i5-8250U  and 8GB DDR4-2400 RAM.}}, the execution time of ours is $2.45\mathrm{s}$, which is smaller than $7.21\mathrm{s}$ of the ES-Q algorithm  and $4.46\mathrm{s}$ of the intersection method. This is due to the following reasons. First, the ES-Q algorithm exhibits a higher complexity than ours when $QN>M$. Next, the intersection method performs outage tests separately for each pair of CSs, and this can cause a larger increase in complexity compared to considering all BS associations in our algorithm, especially when there are not many other BSs whose coverage regions overlap with the coverage region of each BS. The complexity analysis in Table \ref{Tab2} is based on worst-case assumptions, and under the aforementioned sparse BS scenarios, our GIM-B algorithm has the complexity order of $\Theta(M^4)$. 
} 

 %Consequently, for an appropriate value of $M$, the actual execution time of ours is comparable to the existing algorithms while still achieving an optimal solution.

\section{Conclusion}\label{sec7}

For the cellular-enabled UAV's path planning problem in both connectivity and battery constraints, we proposed the generalized intersection method with battery constraint (GIM-B) algorithm that outputs an optimal UAV trajectory in polynomial time. The efficacy of our algorithm, considering both mission completion time and computational complexity, was shown through comprehensive comparisons with existing algorithms both in analytically and numerically.
Furthermore, we proposed the energy-efficient generalized intersection method with battery constraint (EGIM-B) algorithm and the bottleneck edge search method that find the minimum UAV energy consumption and the maximum deliverable payload weight, respectively, under the connectivity and battery constraints.
Various numerical results were provided to show an optimal UAV trajectory and  various performance metrics according to  objectives and environmental parameters.

Let us conclude with some remarks on further works. We assumed that the delay at each CS is fixed over time, but in general, it  changes over time in practice. One interesting approach would be to assume the scenario of time-dependent delays to swap the battery at charging stations and develop shortest path finding algorithms over time-dependent graphs \cite{Orda:1990}. Another interesting scenario would be to consider more practical communication environments based on radio map taking into account signal blockage and reflection by buildings and interference from other BSs \cite{Chen:2017,Zhang:2021}.

\ifCLASSOPTIONcaptionsoff
  \newpage
\fi

\bibliographystyle{IEEEtran}
\bibliography{ref}
\begin{IEEEbiography}[{\includegraphics[width=1in,height=1.25in,clip,keepaspectratio]{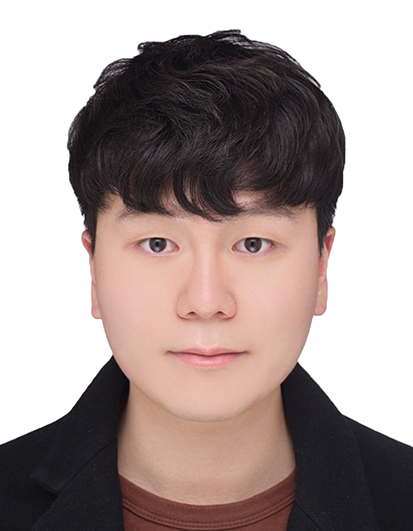}}]
{Hyeon-Seong IM} received the B.S. (summa cum laude) and M.S. degrees from the Pohang University of Science and Technology (POSTECH), Pohang, South Korea, in 2019 and 2020, respectively, and the Ph.D. degree from the School of Electrical Engineering, Korea Advanced Institute of Science and Technology (KAIST), Daejeon, South Korea, in 2024. He is currently a Senior Researcher in LIG Nex1, Seongnam, South Korea. His research interests include military communications, non-terrestrial networks (NTN), and anti-jamming.
\end{IEEEbiography}
\begin{IEEEbiography}[{\includegraphics[width=1in,height=1.25in,clip,keepaspectratio]{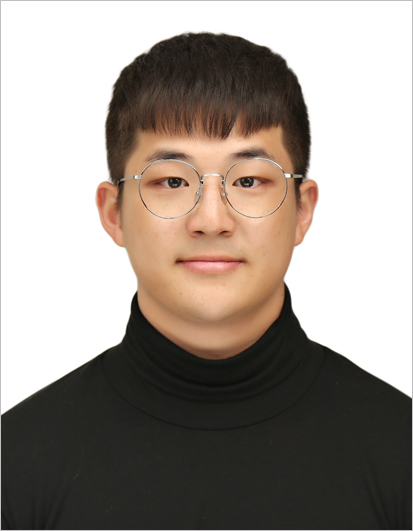}}]
{Kyu-Yeong Kim} (Graduate Student Member, IEEE) received the B.S. (Great Honors) degree in electrical engineering from Korea University in 2022 and the M.S. degree from the School of Electrical Engineering, Korea Advanced Institute
of Science and Technology (KAIST), South Korea, in 2024, where he is currently pursuing the Ph.D. degree.
\end{IEEEbiography}
\begin{IEEEbiography}[{\includegraphics[width=1in,height=1.25in,clip,keepaspectratio]{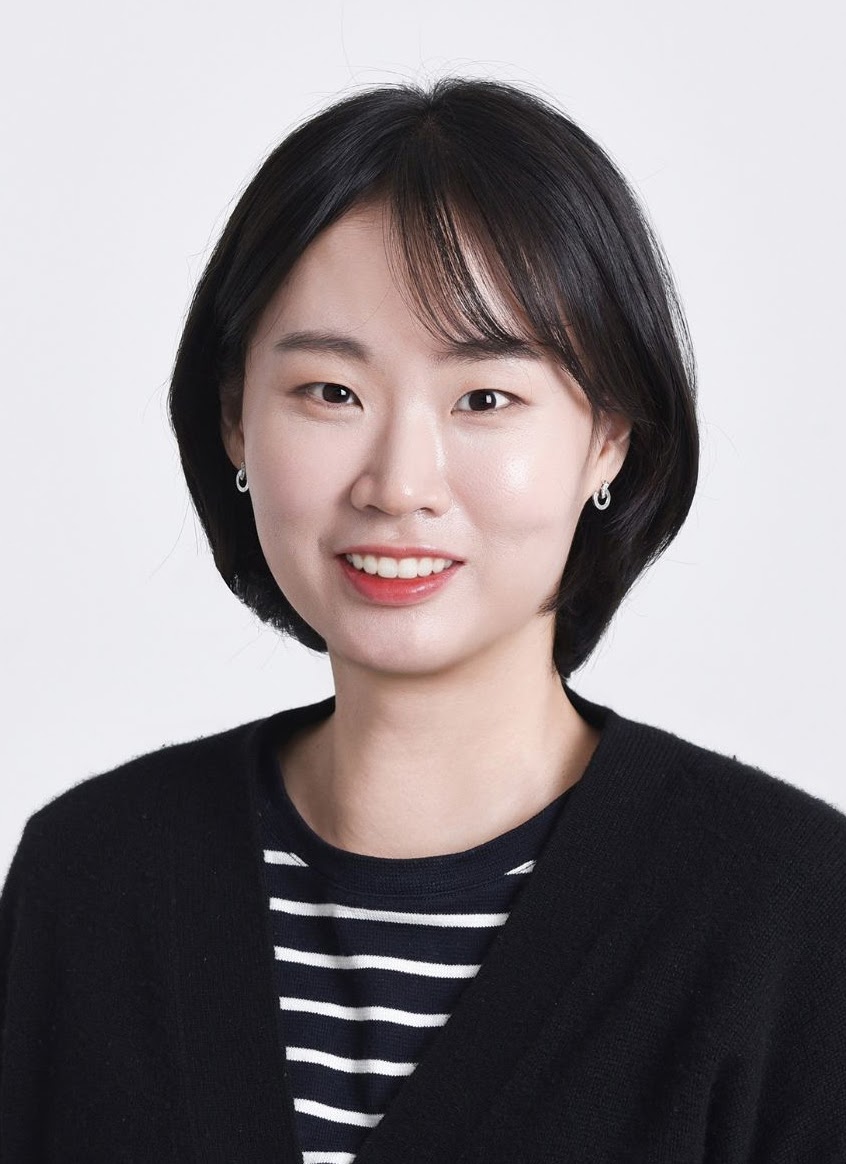}}]
{Si-Hyeon Lee} (Senior Member, IEEE) received the B.S. (summa cum laude) and Ph.D. degrees in electrical engineering from the Korea Advanced Institute of Science and Technology (KAIST), Daejeon, South Korea, in 2007 and 2013, respectively.
She is currently an Associate Professor with the School of Electrical Engineering, KAIST.
She was a Postdoctoral Fellow with the Department of Electrical and Computer Engineering, University of Toronto, Toronto, Canada, from 2014 to 2016, and an Assistant Professor with the Department of Electrical Engineering, Pohang University of Science and Technology (POSTECH), Pohang, South Korea, from 2017 to 2020. Her research interests include information theory, wireless communications, statistical inference, and machine learning. She was a TPC Co-Chair of IEEE Information Theory Workshop 2024. She is currently an IEEE Information Theory Society Distinguished Lecturer (2024-2025) and an Associate Editor for IEEE Transactions on Information Theory. 
\end{IEEEbiography}
\vfill

% that's all folks
\end{document}